\title{Holes in Convex and Simple Drawings} 
\author{Helena Bergold}{Institute of Computer Science, Freie Universit{\"a}t Berlin, Germany}{helena.bergold@fu-berlin.de}{https://orcid.org/0000-0002-9622-8936}{Supported by DFG-Research Training Group 'Facets of Complexity' (DFG-GRK~2434).}
\author{Joachim Orthaber}{Institute of Algorithms and Theory, Graz University of Technology, Austria}{orthaber@tugraz.at}{https://orcid.org/0000-0002-9982-0070}{Supported by the Austrian Science Fund (FWF) grant W1230.}
\author{Manfred Scheucher}{Institute of Mathematics, Technische Universit{\"a}t Berlin, Germany}{scheucher@math.tu-berlin.de}{https://orcid.org/0000-0002-1657-9796}{Supported by DFG Grant SCHE~2214/1-1.}
\author{Felix Schr{\"o}der}{Institute of Mathematics, Technische Universit{\"a}t Berlin, Germany\\ Department of Applied Mathematics, Charles University, Czech Republic}{schroder@kam.mff.cuni.cz}{https://orcid.org/0000-0001-8563-3517}{Supported by the GA\v{C}R Grant no. 23-04949X.}
\authorrunning{H.~Bergold, J.~Orthaber, M.~Scheucher, and F.~Schr{\"o}der}
\keywords{simple topological graph, convexity hierarchy, \texorpdfstring{$k$}{k}-gon, \texorpdfstring{$k$}{k}-hole, empty \texorpdfstring{$k$}{k}-cycle, Erd\H{o}s--Szekeres theorem, Empty Hexagon theorem, planar point set, pseudolinear drawing}
\newcommand{\calC}{\mathcal{C}}
\newcommand{\calD}{\mathcal{D}}
\newcommand{\calT}{\mathcal{T}}
\begin{document}

\maketitle

\begin{abstract}
Gons and holes in point sets have been extensively studied in the literature.
For simple drawings of the complete graph a generalization of the Erd\H{o}s--Szekeres theorem is known and empty triangles have been investigated.
We introduce a notion of $k$-holes for simple drawings and survey generalizations thereof, like empty $k$\=/cycles.
We present a family of simple drawings without $4$-holes and prove a generalization of Gerken's empty hexagon theorem for convex drawings.
A crucial intermediate step is the structural investigation of pseudolinear subdrawings in convex drawings.
With respect to empty $k$-cycles, we show the existence of empty $4$-cycles in every simple drawing of $K_n$ and give a construction that admits only $\Theta(n^2)$ of them.
\end{abstract}

\section{Introduction}\label{sec:Introduction}

A classic theorem from combinatorial geometry is the Erd\H{o}s--Szekeres theorem~\cite{ErdosSzekeres1935}.
It states that for every $k \in \mathbb{N}$ every sufficiently large point set in general position (that is, no three points on a line) contains a subset of $k$ points that are the vertices of a convex polygon, a so called \emph{$k$-gon}.
In this article we focus on a prominent variant of the Erd\H{o}s--Szekeres theorem suggested by Erd\H{o}s himself~\cite{erdos75_problems}, which asks for the existence of \emph{empty} $k$-gons, also known as $k$-holes.
A \emph{$k$-hole} $H$ in a point set $P$ is a $k$-gon with the property that there are no points of~$P$ in the interior of the convex hull of~$H$.
It is known that every sufficiently large point set contains a $6$-hole~\cite{Gerken2008,Nicolas2007} and that there are arbitrarily large point sets without $7$-holes~\cite{Horton1983}.

Point sets in general position are in correspondence with \emph{geometric drawings} of the complete graph where vertices are mapped to points and edges are drawn as straight-line segments between the vertices.
In this article we generalize the notion of holes to simple drawings of the complete graph~$K_n$.
In a \emph{simple drawing}, vertices are mapped to distinct points in the plane (or on the sphere) and edges are mapped to simple curves connecting the two corresponding vertices such that two edges have at most one point in common, which is either a common vertex or a proper crossing.
In the course of this article, we will see that many important properties do not depend on the full drawing but only on the underlying combinatorics, more specifically, on the isomorphism class of a drawing.
We call two simple drawings of the same graph \emph{isomorphic}\footnote{This isomorphism is often referred to as \enquote{weak isomorphism} since there also exist stronger notions.}
if there is a bijection between their vertex sets such that the corresponding pairs of edges cross.
Note that this isomorphism is independent of the choice of the outer cell and thus only encodes the simple drawing on the sphere.

To study $k$-holes, we first extend the notion of $k$-gons to simple drawings of~$K_n$.
A~\emph{$k$\=/gon}\footnote{%
We keep the terminology from the geometric setting and trust that this does not lead to any confusion.}%
~$\calC_k$ is a subdrawing isomorphic to the geometric drawing on $k$ points in convex position; see \Cref{fig:convex} for a depiction of an $n$-gon.
In terms of crossings, a $k$-gon $\calC_k$ is a (sub)drawing with vertices $v_1, \ldots, v_k$  such that $\{v_i,v_\ell\}$ crosses $ \{v_j,v_m\}$ if and only if $i<j<\ell<m$.
In contrast to the geometric setting where every sufficiently large geometric drawing contains a $k$-gon, simple drawings of complete graphs do not necessarily contain $k$-gons~\cite{Harborth98_emptytriangles}.
For example, the twisted drawing $\calT_n$ depicted in \Cref{fig:twisted} does not contain any 5-gon.
In terms of crossings, $\calT_n$~can be characterized as a drawing of $K_n$ with vertices $v_1, \ldots, v_n$ such that $\{v_i, v_m\}$ crosses $\{v_j, v_{\ell}\}$ exactly if $i<j<\ell<m$.
A~theorem by Pach, Solymosi and T\'oth~\cite{PachSolymosiToth2003} states that, for every~$k$, every sufficiently large simple drawing of~$K_n$ contains $\calC_k$ or~$\calT_k$.
The currently best known estimate is due to Suk and Zeng~\cite{sz-2024-upcstg} who showed that every simple drawing of $K_n$ with $n > 2^{9 \cdot \log_2(a)\log_2(b) a^2  b^2}$ contains $\calC_a$ or~$\calT_b$.
Convex drawings, which we define in the next paragraph, are a class of drawings nested between geometric drawings and simple drawings. In particular, convex drawings do not contain $\calT_5$ as a subdrawing.
Hence every convex drawing of $K_n$ contains a $k$-gon $\calC_k$ for some $k = (\log n)^{1/2-o(1)}$.

\begin{figure}[htb]
\centering
\subcaptionbox{\label{fig:convex}}[.49\textwidth]{\includegraphics[page=1]{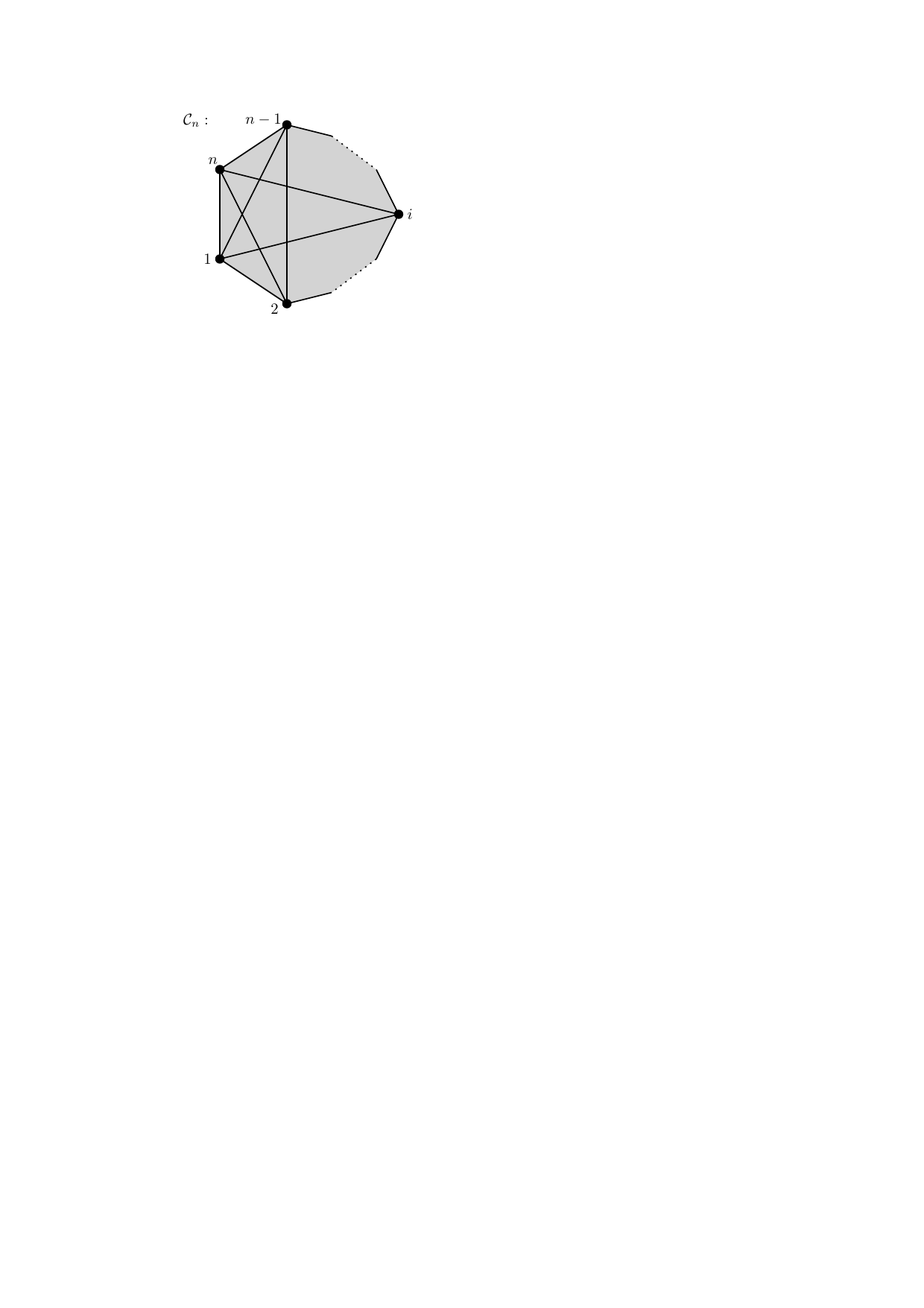}}
\subcaptionbox{\label{fig:twisted}}[.49\textwidth]{\includegraphics[page=1]{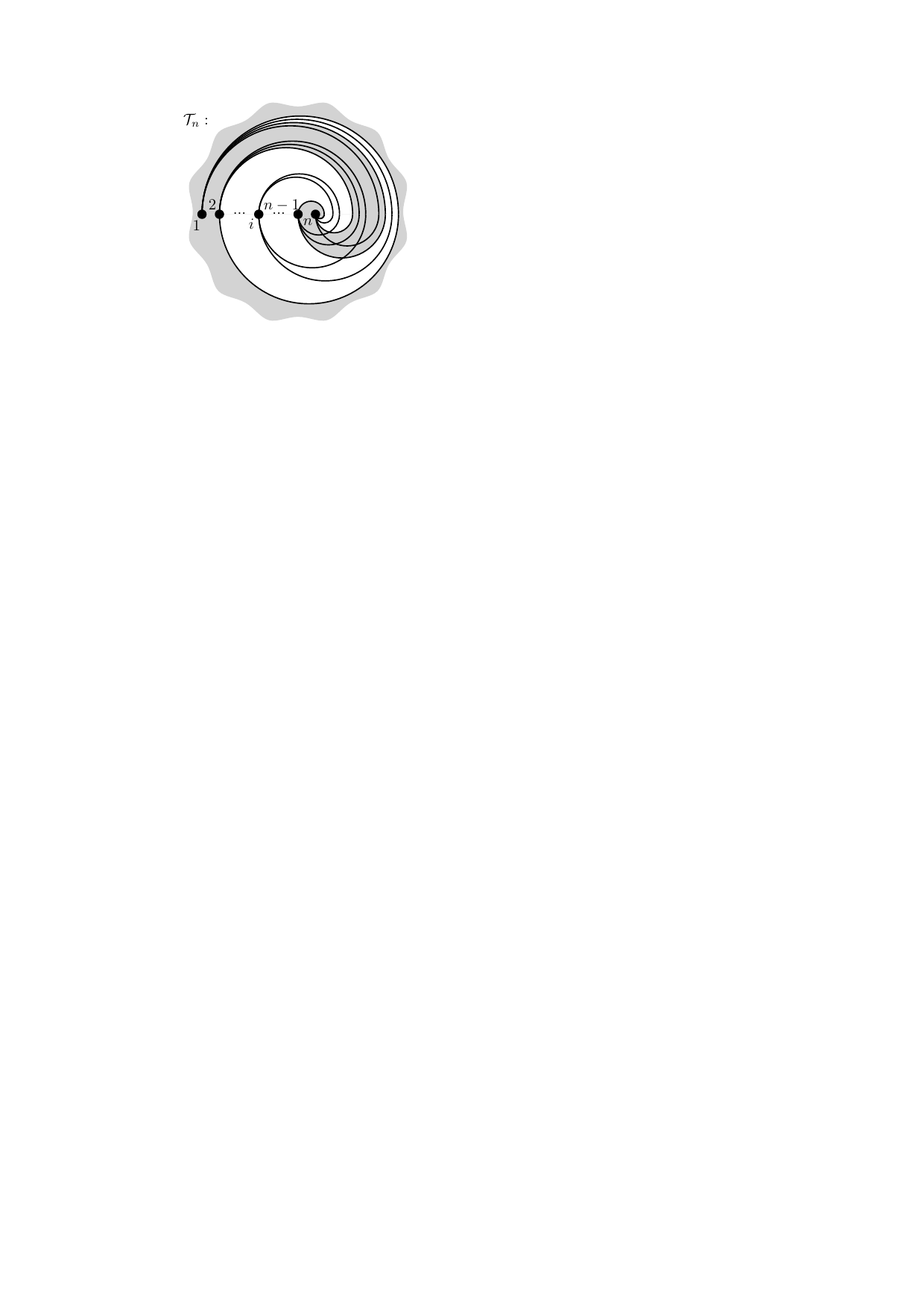}}
\caption{A drawing of \subref{fig:convex} an $n$-gon $\calC_n$ and \subref{fig:twisted} a twisted $\calT_n$ for $n \geq 4$. The largest hole in these drawings, an $n$-hole and a $4$-hole respectively, is shaded in gray.}
\label{fig:convextwisted}
\end{figure}

In the last decades, holes were intensively studied for the setting of point sets.
Our focus lies on determining the existence of holes in convex drawings, the most general class of the convexity hierarchy introduced by Arroyo, McQuillan, Richter, and Salazar~\cite{ArroyoMRS2022}, which gives a more fine-grained layering between geometric drawings and simple drawings.
The basis to define convexity are \emph{triangles}, which are subdrawings induced by three vertices.
Since in a simple drawing incident edges do not cross, a triangle separates the plane (respectively the sphere) into two connected components.
The closure of each of the components is called a \emph{side} of the triangle.
A side $S$ is \emph{convex} if, for every pair of vertices in~$S$, the connecting edge is fully contained in~$S$.
A simple drawing $\calD$ of $K_n$ is
\begin{itemize}
\item \emph{convex} if every triangle in $\calD$ has a convex side;
\item \emph{h-convex} (hereditarily convex) if there is a choice of a convex side $S_T$ for every triangle $T$ such that, for every triangle $T'$ contained in $S_T$, it holds that $S_{T'} \subseteq S_T$;
\item \emph{f-convex} (face convex) if there is a marking face~$F$ in the plane such that for all triangles the side not containing $F$ is convex. 
\end{itemize}

The class of f-convex drawings is related to pseudolinear drawings.
A \emph{pseudolinear drawing} is a simple drawing in the plane such that the edges can be extended to an arrangement of pseudolines.
A \emph{pseudoline} is a simple curve partitioning the plane into two unbounded components and in an \emph{arrangement} each pair of pseudolines has exactly one point in common, which is a proper crossing.
As shown by Arroyo, McQuillan, Richter, and Salazar~\cite{ArroyoMRS2017_pseudolines}, a simple drawing of $K_n$ is pseudolinear if and only if it is f-convex and the marking face $F$ is the unbounded face.
For more details on the convexity hierarchy and the classes it contains, we refer the reader to~\cite{ArroyoMRS2017_pseudolines,ArroyoMRS2022,ArroyoRS21_pscircle,BFSSS_TDCTCG_2022}.

Before we define $k$-holes, consider the case of 3-holes, also known as empty triangles.
A~triangle is \emph{empty} if one of its two sides does not contain any vertex in its interior.
Harborth~\cite{Harborth98_emptytriangles} proved that every simple drawing of $K_n$ contains at least two empty triangles and conjectured that the minimum among all simple drawings of $K_n$ is~$2n-4$.
While $2n-4$ is obtained by~$\calT_n$ and all generalized twisted drawings~\cite{GarciaTejelVogtenhuberWeinberger2022}, the best known lower bound is~$n$~\cite{AichholzerHPRSV2015}.

In the geometric setting, the number of empty triangles behaves differently: every point set has $\Omega(n^2)$ empty triangles, and this bound is asymptotically optimal~\cite{BaranyFueredi1987}.
Note that the notion of empty triangles in point sets slightly differs from the one in simple drawings, where the complement of the convex hull of a point set can be an empty triangle as well.
The class of convex drawings behaves similarly to the geometric setting: the minimum number of empty triangles is asymptotically quadratic~\cite[Theorem 5]{ArroyoMRS2017_pseudolines}.

\subparagraph{Holes in Simple Drawings.}
In the drawing $\calC_k$ with $k \geq 4$, every triangle has exactly one empty side, which is also its unique convex side. 
The \emph{convex side} of $\calC_k$ is the union of convex sides of its triangles; see the gray shaded regions in \Cref{fig:convextwisted}.
Given a $k$-gon $\calC_k$ in a simple drawing of~$K_n$, we call vertices in the interior of the convex side of $\calC_k$ \emph{interior vertices}.
A~\emph{$k$-hole} in a simple drawing of~$K_n$ is a $k$-gon that has no interior vertices.
For example, the vertices $1$, $2$, $n-1$, and $n$ in $\calT_n$ form a $4$-hole; marked gray in \Cref{fig:twisted}.
In convex drawings, as in the geometric setting, edges from an interior vertex to a vertex of $\calC_k$ and edges between two interior vertices are contained in the convex side of~$\calC_k$ \cite[Lemma 3.5]{ArroyoMRS2022}; see also \Cref{sec:convexholes}.

\medskip

In this paper, using the notion of $k$-holes in simple drawings defined above, we resolve the questions of existence of $4$-, $5$- and $6$-holes in simple and convex drawings of~$K_n$.
In~particular, we show the existence of $6$-holes in sufficiently large convex drawings (\Cref{thm:6holes}), generalizing Gerken's empty hexagon theorem~\cite{Gerken2008}.
The key ingredient of the proof is that in a convex drawing every subdrawing induced by a minimal $k$-gon together with its interior vertices is f-convex (\Cref{lem:minimalkgonpseudolinear}).
This allows to transfer various existential results from the geometric, pseudolinear, and f-convex settings to convex drawings.
Besides the existence of 6-holes, we also show the existence of monochromatic generalized 4-holes in two-colored convex drawings (\Cref{cor:genmonochromatic4hole}), generalizing a result on bichromatic point sets~\cite{AHHHFV2010}.
For this we discuss two variants of generalized holes (\Cref{sec:generalized_holes}) in the setting of simple drawings of~$K_n$ and show the existence of empty $4$-cycles, that is, plane cycles of length $4$ such that one side does not contain any interior vertices (\Cref{thm:empty4cycle}).
Furthermore, we construct a simple drawing of $K_n$ that does not contain any two interior-disjoint empty triangles sharing an edge (\Cref{prop:nogen4hole}) and another one containing only $\Theta(n^2)$ empty $4$-cycles (\Cref{prop:empty4-cycle}).

\section{Holes in Convex Drawings}\label{sec:convexholes}

In this section, we show that convex drawings behave similarly to geometric point sets when it comes to the existence of holes.
We show that every sufficiently large convex drawing contains a $6$-hole and hence a $5$-hole and a $4$-hole.
This is tight, as the construction by Horton~\cite{Horton1983} gives arbitrarily large point sets, that is, geometric drawings without $7$-holes.

\begin{theorem}[Empty hexagon theorem for convex drawings]
\label{thm:6holes}
For every sufficiently large $n$, every convex drawing of $K_n$ contains a 6-hole.
\end{theorem}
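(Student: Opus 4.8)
The plan is to reduce the statement to an empty hexagon theorem for f-convex drawings by locating a large f-convex subdrawing inside the given convex drawing. Let $m_0$ be a constant such that every f-convex drawing of $K_m$ with $m \ge m_0$ contains a $6$-hole; I treat this as a black box (it extends Gerken's theorem to the pseudolinear, and hence f-convex, setting). First I would invoke the Erd\H{o}s--Szekeres-type bound recalled in the introduction: since convex drawings contain no $\calT_5$, every convex drawing of $K_n$ contains a $k$-gon $\calC_k$ with $k = (\log n)^{1/2-o(1)}$. Choosing $n$ large enough therefore guarantees a $k$-gon with $k \ge m_0$.

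Next I would fix such a $k$-gon on $k \ge m_0$ vertices and, among all $k$-gons on the same number of vertices, pass to one, $\calC_k$, whose convex side is inclusion-minimal, so that \Cref{lem:minimalkgonpseudolinear} applies. Passing to the subdrawing $\calD'$ induced by the vertices of $\calC_k$ together with all interior vertices of $\calC_k$, \Cref{lem:minimalkgonpseudolinear} tells us that $\calD'$ is f-convex. Since $\calD'$ contains at least the $k \ge m_0$ vertices of $\calC_k$, the empty hexagon theorem for f-convex drawings produces a $6$-hole of $\calD'$, that is, a $6$-gon $\calC_6 \subseteq \calD'$ whose convex side contains no vertex of $\calD'$.

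The crux is to promote this $6$-hole of $\calD'$ to a $6$-hole of the whole drawing. All six vertices of $\calC_6$ lie in the (closed) convex side of $\calC_k$; in a convex drawing edges between interior vertices and edges from interior vertices to $\calC_k$ lie in that convex side \cite[Lemma 3.5]{ArroyoMRS2022}, and the chords among vertices of $\calC_k$ lie there by definition of the convex side. Hence $\calC_6$ is drawn entirely within the convex side of $\calC_k$, and by the nesting of convex sides its own convex side is contained in that of $\calC_k$. Consequently, any vertex of the original drawing lying in the convex side of $\calC_6$ would already lie in the convex side of $\calC_k$, i.e.\ be an interior vertex of $\calC_k$ and thus a vertex of $\calD'$. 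As $\calC_6$ is empty within $\calD'$, no such vertex exists, so $\calC_6$ is a genuine $6$-hole of the convex drawing.

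The main obstacle is not this lifting but its two ingredients. One is the structural statement \Cref{lem:minimalkgonpseudolinear}, that a minimal $k$-gon together with its interior is f-convex; this is already available from the excerpt. The other, and the genuinely hard input, is the empty hexagon theorem for f-convex drawings: Gerken's geometric argument must be shown to survive the passage from straight-line to pseudolinear drawings, and establishing that this black box $m_0$ exists is where the real work lies.
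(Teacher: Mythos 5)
Your proposal is correct and takes essentially the same approach as the paper: extract a large $k$-gon (using that convex drawings contain no $\calT_5$ together with the Erd\H{o}s--Szekeres-type theorem), pass to a minimal one, apply \Cref{lem:minimalkgonpseudolinear} to obtain an f-convex induced subdrawing, find a 6-hole there, and lift it back to the full drawing exactly as you describe. The only difference is that your black box $m_0$ is not outstanding work but a known, citable result: the paper invokes Heule and Scheucher's theorem that every pseudoconfiguration of 30 points contains a 6-hole (applied via the Balko--Fulek--Kyn\v{c}l correspondence between pseudolinear drawings and pseudoconfigurations, after choosing the marking face as the unbounded face), which yields $m_0 = 30$ and an explicit bound on~$n$.
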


For the proof we use the existence of $k$-gons in sufficiently large convex drawings~\cite{PachSolymosiToth2003,sz-2024-upcstg}.
Our key lemma is that the subdrawing induced by a minimal $k$-gon together with its interior vertices is f-convex, a fact that had been known only for h-convex drawings~\cite[Lemma 4.7]{ArroyoMRS2022}.
For $k$ fixed, a $k$-gon is \emph{minimal} if its convex side does not contain the convex side of another $k$-gon.

Arroyo, McQuillan, Richter and Salazar \cite[Section 3]{ArroyoMRS2022} started the investigations of interior vertices of $k$-gons.
An important part is their Lemma~3.5, which we use in the following.

\begin{lemma}[cf.\ {\cite[Lemma~3.5]{ArroyoMRS2022}}]
\label{lem:arroyo_verticesingon}
Let $\calC_k$ be a $k$-gon in a convex drawing of $K_n$ with vertices $v_1, \ldots, v_k$ and $k \geq 4$.
Then for every two vertices $u,v$ contained in the convex side of~$\calC_k$ the edge $\{u,v\}$ is contained in the convex side of~$\calC_k$.
\end{lemma}

Note that in a $k$-gon $\calC_k$ the edges on its convex hull form a \emph{plane $k$-cycle}, that is, a cycle of length $k$ that does not cross itself.
This plane cycle divides the plane into two connected components whose closures we call \emph{sides}.
Furthermore, all \emph{chords} of that cycle, that is, edges between non-adjacent vertices of the cycle lie on the same side of the cycle.
On the other hand, if all chords of a plane $k$-cycle lie on the same side of it, then they cross each other in the exact same pattern as in a $k$-gon~$\calC_k$.

\begin{observation}\label{obs:k-gon-is-plane-cycle-with-chords}
A $k$-gon $\calC_k$ is equivalent to a plane $k$-cycle that has all chords on the same side, which is the convex side of $\calC_k$.
\end{observation}

For the sake of readability, we refer to the vertices $v_1, \ldots, v_k$ of a $k$-gon with indices modulo~$k$.

\begin{lemma}
\label{lem:vertexinboundarytriangle}
Let $\calC_k$ be a minimal $k$-gon in a convex drawing $\mathcal{D}$ of $K_n$ with vertices $v_1, \ldots, v_k$ and $k \geq 3$.
Then for all~$i$ there are no interior vertices in the convex side of the triangle $\{v_i, v_{i+1}, v_{i+2}\}$.
In particular, every minimal $4$-gon is a $4$-hole and every minimal $3$-gon is an empty triangle.
\end{lemma}

\begin{proof}
Assume there is an interior vertex $v$ in the convex side of the triangle determined by $\{v_i, v_{i+1}, v_{i+2}\}$.
If $k=3$ then, by minimality of $\calC_k = \{ v_1, v_2, v_3 \}$, the side $S_N$ of the triangle $\{ v_1, v_2, v \}$ contained in the convex side of $\calC_k$ cannot be convex.
Hence, there exists a vertex $z$ in the interior of~$S_N$ such that the subdrawing induced by $\{v_1, v_2, v, z\}$ has a crossing~\cite[Corollary~2.5]{ArroyoMRS2022}.
The edge $\{z,v\}$ cannot cross $\{v_1,v_2\}$ since that would contradict the convex side of~$\calC_k$.
Hence, without loss of generality, let the edge $\{z,v_1\}$ cross $\{v,v_2\}$; \Cref{fig:minimal-3-gon} gives an illustration.
Then, however, the edge $\{z,v_1\}$ shows that the triangles $\{ v_2, v_3, v \}$ and $\{ v_1, v, v_3 \}$ both have a unique convex side, which is the side contained in the convex side of $\calC_k$.
This is a contradiction to the minimality of $\calC_k$.
Thus, a minimal $3$-gon is an empty triangle.

For $k \geq 4$, clearly the vertices $v_1, \ldots, v_{i}, \allowbreak{} v_{i+2}, \ldots, v_k$ span a $(k-1)$-gon and the triangle $v_i, v, v_{i+2}$ is not contained in the convex side of that $(k-1)$-gon.
Moreover all chords of $\calC_k$ not involving $v_{i+1}$ lie in the convex side of that $(k-1)$-gon.
It remains to consider edges incident to~$v$.
Let $j \in [k] \setminus \{i,i+1,i+2\}$ be arbitrary but fixed.
By \Cref{lem:arroyo_verticesingon}, the edge $\{v,v_j\}$ does not leave the convex side of $\calC_k$ and, since $\mathcal{D}$ is a simple drawing, $\{v,v_j\}$ crosses $\{v_i,v_{i+2}\}$ and therefore lies in the convex side of the $4$-gon $v, v_i, v_j, v_{i+2}$.
\Cref{fig:smallerkgon} gives an illustration.
This shows that $v_1, \ldots, v_{i}, \allowbreak{} v, \allowbreak{} v_{i+2}, \ldots, v_k$ span a plane $k$-cycle with all chords on the same side and hence, by \Cref{obs:k-gon-is-plane-cycle-with-chords}, they span a $k$-gon~$\calC'_k$.
Furthermore, the convex side of $\calC'_k$ is contained in the convex side of $\calC_k$, implying that $\calC_k$ was not minimal; a contradiction.
\end{proof}

\begin{figure}[htb]
\centering
\subcaptionbox{\label{fig:minimal-3-gon}}[.49\textwidth]{\includegraphics[page=1]{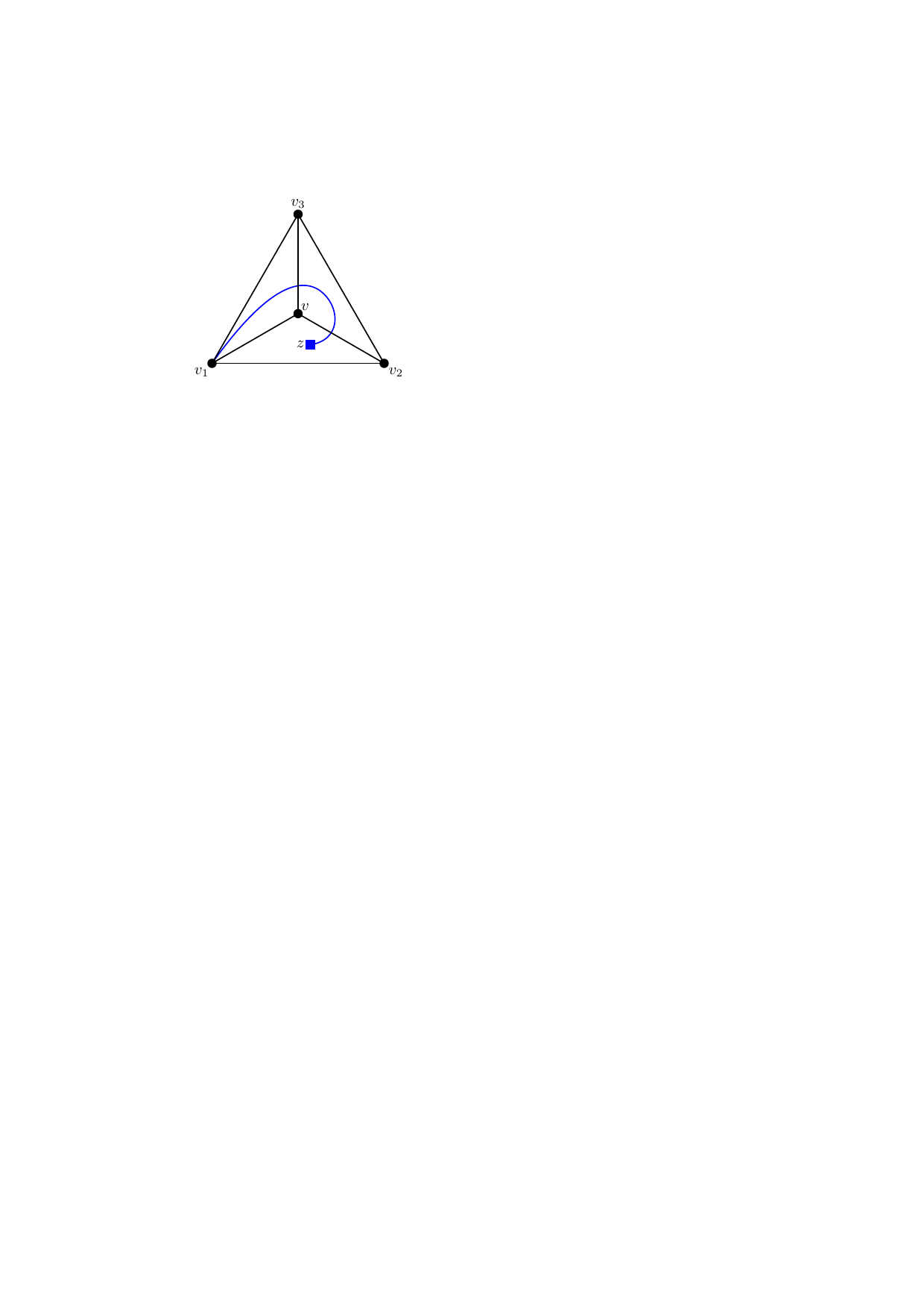}}
\subcaptionbox{\label{fig:smallerkgon}}[.49\textwidth]{\includegraphics[page=2]{figs/smallergons.pdf}}
\caption{\subref{fig:minimal-3-gon} If the convex side of a triangle is not empty, it contains the convex side of another triangle. \subref{fig:smallerkgon} A $k$-gon with an interior vertex $v$ in the convex side of the triangle $v_i,v_{i+1},v_{i+2}$.}
\label{fig:minimal-k-gon}
\end{figure}

\begin{lemma}[Key lemma]
\label{lem:minimalkgonpseudolinear}
Let $\calC_k$ be a minimal $k$-gon in a convex drawing $\calD$ of $K_n$ with \mbox{$n \geq k \geq 3$}.
Then the subdrawing $\calD'$ induced by the vertices in the convex side of~$\calC_k$ is f-convex.
\end{lemma}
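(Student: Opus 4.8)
The plan is to produce a marking face witnessing f-convexity and then to rule out the single bad configuration by invoking minimality of $\calC_k$.

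First I would fix the marking face $F$ to lie in the non-convex side of $\calC_k$, that is, in the complement of its convex side. By \cite[Lemma~3.5]{ArroyoMRS2022} every edge of $\calD'$ between two vertices of the convex side stays inside the convex side, so the whole subdrawing $\calD'$ is contained in the (closed) convex side of $\calC_k$, while $F$ lies strictly outside it. With this choice, showing that $\calD'$ is f-convex reduces to the following local statement: for every triangle $T$ of $\calD'$, the side $S_T$ that does \emph{not} contain $F$ is convex.

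Since $\calD'$ inherits convexity from $\calD$, every triangle $T=\{a,b,c\}$ of $\calD'$ has at least one convex side, and the claim holds for $T$ unless its \emph{unique} convex side is the one containing $F$. I would assume, for contradiction, that such a ``bad'' triangle $T$ exists. Then its non-$F$ side $N$ is non-convex, so there are vertices $x,y$ of $\calD'$ lying in $N$ whose edge $xy$ leaves $N$, crossing $\partial T$ into the $F$-side; yet $xy$ stays inside the convex side of $\calC_k$. Thus $T$ bulges toward $F$ and, together with $xy$, pinches off a proper subregion of the convex side of $\calC_k$ that is cut off from $F$.

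The heart of the argument -- and the step I expect to be the main obstacle -- is to turn such a bad triangle into a $k$-gon whose convex side is \emph{strictly} contained in that of $\calC_k$, contradicting minimality. The intuition is that a triangle whose convex side points outward certifies that the boundary curve of $\calC_k$ can be rerouted: short-cutting along the edges of $T$ and across $xy$ carves out a strictly smaller convex region that still carries $k$ vertices in the $\calC_k$-pattern. Making this precise amounts to a finite case analysis of how $a,b,c,x,y$ interleave with the vertices $v_1,\dots,v_k$ of $\calC_k$ -- equivalently, since convexity is already determined by $5$-vertex subdrawings, to listing the finitely many non-f-convex obstructions (such as $\obstructionFiveA$ and $\obstructionFiveB$) on at most five vertices and checking that each one, placed inside a $k$-gon, forces a $k$-gon with strictly smaller convex side. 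Here the hypothesis $k\ge 5$ guarantees that the convex side is genuinely two-dimensional and consistently oriented, so that the rerouting is well defined. Once minimality excludes every bad triangle, every non-$F$ side is convex and $\calD'$ is f-convex with marking face $F$.
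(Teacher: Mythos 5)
Your setup coincides with the paper's: fix the marking face $F$ in the non-convex side of $\calC_k$, use \cite[Lemma~3.5]{ArroyoMRS2022} to keep $\calD'$ inside the convex side, and argue by contradiction with a triangle whose non-$F$ side is non-convex. But everything after that is a genuine gap, and you flag it yourself: the entire contradiction is delegated to the unproven claim that a bad triangle can be converted into a $k$-gon whose convex side is strictly smaller than that of $\calC_k$. The ``rerouting/short-cutting'' picture is not a construction --- a $k$-gon must be \emph{spanned by $k$ vertices of the drawing} with the $\calC_k$ crossing pattern, so you would have to say which vertices of $\{v_1,\dots,v_k\}$ get replaced by which vertices of $\{a,b,c,x,y\}$ and verify the resulting rotation/crossing pattern; nothing in the sketch does this. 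The fallback to a ``finite case analysis of obstructions on at most five vertices'' also does not work as stated: convexity is determined by $5$-vertex subdrawings, but f-convexity is a global property relative to a marking face and is not characterized by forbidden $5$-vertex subdrawings, and the interleaving of the five witness vertices with the $k$ vertices of $\calC_k$ is not a bounded number of cases for arbitrary $k$ (at least, you give no reduction making it so).

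The paper's proof closes this hole by a different and more economical use of minimality. Minimality enters only through \Cref{lem:vertexinboundarytriangle}: a minimal $k$-gon has no interior vertex in the convex side of any corner triangle $\{v_i,v_{i+1},v_{i+2}\}$ (this is the only place a ``swap a vertex, get a smaller $k$-gon'' argument is needed, and there it is easy: replace $v_{i+1}$ by the interior vertex). Consequently the bad triangle's vertices $t_1,t_2,t_3$ must be \emph{interior} vertices of $\calC_k$, and its non-convex side $S_N$ lies inside a single cell of $\calC_k$ --- the latter because the $F$-side $S_C$ is convex by convexity of $\calD$, so no edge $\{v_i,v_j\}$ can cross a triangle edge. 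The contradiction then comes from convexity of $\calD$, not from minimality: pick a witness $z$ in $S_N$ with $\{t_1,z\}$ crossing $\{t_2,t_3\}$ (using \cite[Corollary~2.5]{ArroyoMRS2022}), observe that the edges $\{t_2,v_1\},\dots,\{t_2,v_k\}$ lie in $S_C$ and partition the convex side of $\calC_k$ into triangles $\{t_2,v_i,v_{i+1}\}$, and note that $t_1,t_3,z$ end up in the convex side of one such triangle while the edge $\{t_1,z\}$ leaves it. To repair your proposal you would either have to actually prove your smaller-$k$-gon claim --- which the paper never establishes, and which is stronger than what is needed --- or replace the core of your argument by a trapping argument of this kind.
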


\begin{proof}
For $k \leq 4$, by \Cref{lem:vertexinboundarytriangle}, a minimal $k$-gon is empty and thus $\calD'$ is clearly f-convex.

So let $k \geq 5$, let $v_1, \ldots, v_k$ be the vertices of the minimal $k$-gon $\calC_k$ in $\calD$, and let $F$ be a face contained in the non-convex side of~$\calC_k$.
We show that for every triangle spanned by three vertices of the convex side of $\calC_k$, the side not containing $F$ is convex and hence $\calD'$ is f-convex.
Suppose towards a contradiction that there exists a triangle $T$ spanned by vertices $t_1,t_2,t_3$ from the convex side of~$\calC_k$, such that the side not containing $F$ is not convex.
Then this non-convex side $S_N$ of $T$ is the side contained in the convex side of~$\calC_k$.
Since $\calD$ is convex, the other side of $T$, containing $F$ and all vertices $v_1, \ldots, v_k$, is convex and is denoted by~$S_C$.
If we additionally assume that $S_N$ is not contained in (the closure of) a single cell of the subdrawing induced by~$\calC_k$, then some edge $\{v_i,v_j\}$ has a crossing with one of the edges $\{t_\ell,t_m\}$ of $T$.
This shows that $S_C$ is not convex; a contradiction.
Hence, $S_N$ lies in (the closure of) a cell of~$\calC_k$.

Since $\calC_k$ is minimal, by \Cref{lem:vertexinboundarytriangle}, there are no interior vertices in the convex side of a triangle $\{v_i,v_{i+1},v_{i+2}\}$.

Since all cells in the convex side of $\calC_k$ incident to the vertex $v_{i+1}$ are inside this triangle, the vertex $v_{i+1}$ is not part of the triangle $T$ spanned by $t_1, t_2, t_3$.
This holds for every~$i=1,\ldots,k$ and hence the vertices $t_1, t_2, t_3$ are interior vertices of $\calC_k$ and $S_N$ lies in a cell of the convex side of $\calC_k$ that is not covered by the convex side of any triangle $\{v_i, v_{i+1}, v_{i+2} \}$.
Since $S_N$ is not convex, there exists a vertex $z$ in the interior of~$S_N$ such that the subdrawing induced by $\{t_1, t_2, t_3,z\}$ has a crossing~\cite[Corollary 2.5]{ArroyoMRS2022}.
We assume without loss of generality that the edge $\{t_1,z\}$ crosses $\{t_2,t_3\}$.
Moreover, exactly one of the following two conditions holds:
Either the triangle $\{t_1, t_3,z\}$ separates $t_2$ and~$F$ or the triangle $\{t_1, t_2,z\}$ separates $t_3$ and~$F$.
We assume that the former holds as otherwise we exchange the roles of $t_2$ and~$t_3$.
\Cref{fig:withoutboundary} gives an illustration.
	
\begin{figure}[htb]
\centering
\includegraphics[page=1]{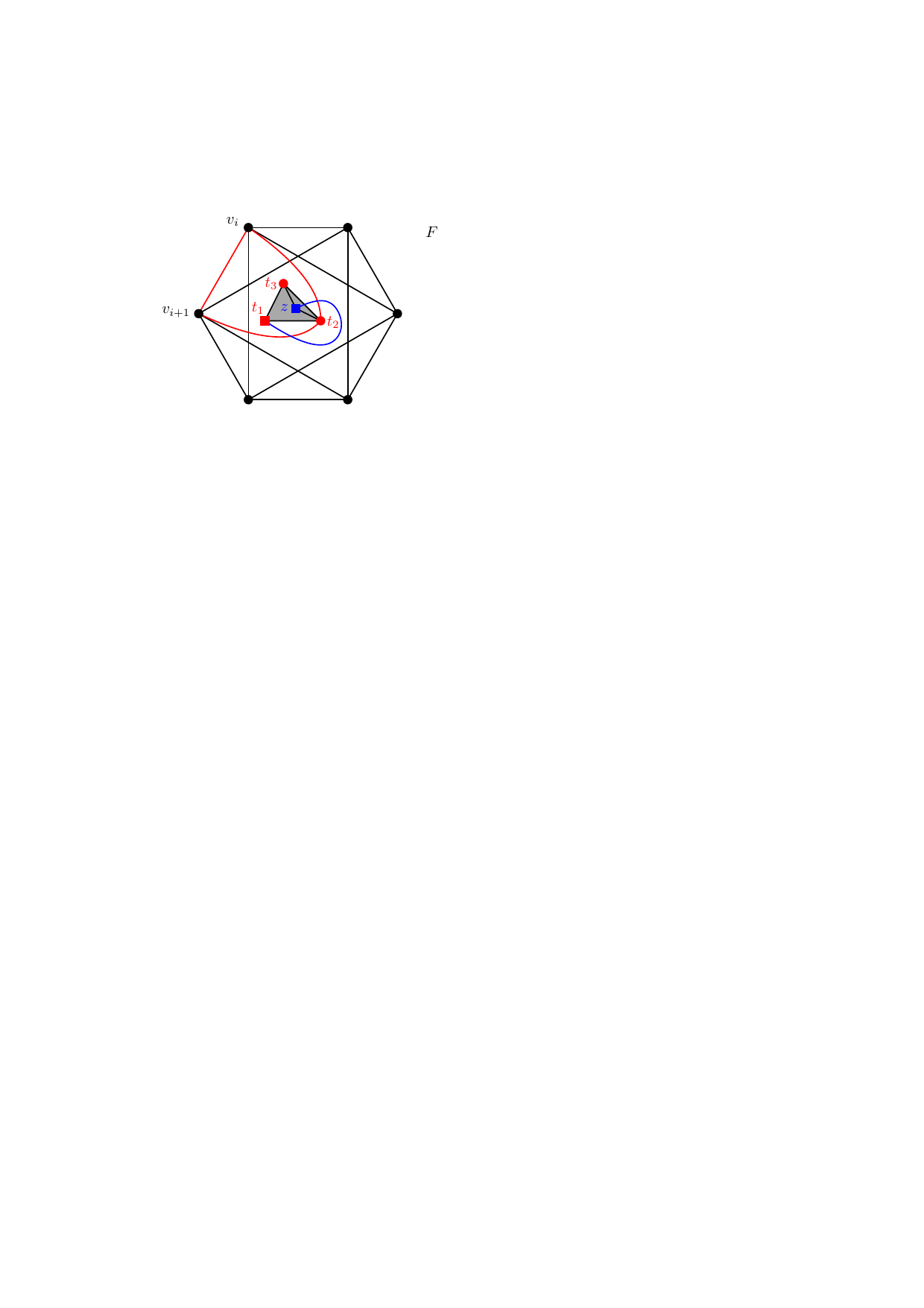}
\caption{The non-convex side (shaded gray) of the triangle $\{t_1, t_2, t_3\}$ (red vertices) is witnessed by the edge $\{t_1, z\}$ (blue), and the triangle $\{t_1, t_3, z\}$ separates $t_2$ and~$F$. Then the triangle $\{t_2, v_i, v_{i+1}\}$ (red edges) has no convex side.}
\label{fig:withoutboundary}
\end{figure}

Now we consider all edges from $t_2$ to the vertices $v_1,\ldots,v_k$ of $\calC_k$.
Since $S_C$ is convex and contains $v_1,\ldots, v_k$, the edges $\{t_2,v_i\}$ are contained in $S_C$.
This shows that none of the edges $\{t_2,v_i\}$ crosses any of the triangle edges and, in particular, they do not cross $\{t_1,t_3\}$.
	
The edges $\{t_2,v_1\},\ldots,\{t_2,v_k\}$ partition the convex side of $\calC_k$ into triangles $\{t_2,v_i,v_{i+1}\}$.
Hence there is an index~$i$ such that the three vertices $t_1,t_3,z$ lie in the side of the triangle $\{t_2,v_i,v_{i+1}\}$, which is contained in the convex side of~$\calC_k$.
However, the edge $\{t_1,z\}$ is not fully contained in this side; a contradiction to its convexity.
Moreover, the other side of that triangle is not convex either:
Since $t_2$ is not inside the triangle $\{v_i,v_{i+1},v_{i+2}\}$ the edge $\{v_i,v_{i+2}\}$ crosses $\{t_2,v_{i+1}\}$, so it is not fully contained in this side.
This is a contradiction to the convexity of the drawing and thus completes the proof.
\end{proof}

Recently, Heule and Scheucher~\cite{HeuleScheucher2024} used SAT to show that every set of 30 points has a 6-hole.
Since their result is about the more general case of pseudoconfigurations of points, it holds for pseudolinear drawings.
To prove \Cref{thm:6holes}, we combine this fact with \Cref{lem:minimalkgonpseudolinear}.

\begin{proof}[Proof of \Cref{thm:6holes}]
Let $\calD$ be a convex drawing of $K_n$ with $ n > 2^{9\cdot 5^2 \log_2 (5) \cdot 30^2 \log_2 (30)}$.
Since convex drawings do not contain the twisted drawing $\calT_5$, it follows from \cite{sz-2024-upcstg} that $\calD$ contains a $30$-gon.
To find a $6$-hole in~$\calD$, we choose a minimal $30$-gon~$\calC_{30}$.
By \Cref{lem:minimalkgonpseudolinear}, the subdrawing $\calD'$ induced by $\calC_{30}$ and its interior vertices is f-convex.
Since the existence of holes is invariant under the choice of the outer cell, we can assume without loss of generality that $\calD'$ is pseudolinear as we may otherwise choose the face $F$ as the unbounded face.
According to \cite{BalkoFulekKyncl2015}, $\calD'$ corresponds to a pseudoconfiguration of points, and hence there exists a $6$-hole~$\calC_{6}$ in~$\calD'$~\cite{HeuleScheucher2024}.
Hence the convex side of $\calC_{6}$ does not contain any vertex of~$\calD'$.
Moreover, every vertex of $\calD$ in the convex side of~$\calC_{6}$ would be an interior vertex of $\calC_{30}$ and therefore belong to~$\calD'$.
This shows that $\calC_{6}$ is also a $6$-hole in $\calD$.
\end{proof}

The existence of $6$-holes further implies the existence of $4$- and $5$-holes.
However, it remains a challenging task to determine the smallest integer~$n(k)$ such that every convex drawing of~$K_n$ with $n \geq n(k)$ contains a $k$-hole for $k \in \{ 5, 6 \}$. The case $k=4$ we resolve below.

For $6$-holes, one can slightly improve the estimate from \Cref{thm:6holes} by utilizing the fact that every $9$-gon in a point set yields a $6$-hole~\cite{Gerken2008}.
As shown in~\cite{Scheucher2022}, this result transfers to pseudolinear drawings.
It then follows from \cite{sz-2024-upcstg} and \Cref{lem:minimalkgonpseudolinear} that every convex drawing of $K_n$ with $n > 5^{\num{18225} \cdot \log_2 (9)}$ contains a $6$-hole.

A similar improvement is possible for $5$-holes: as the textbook proof for the existence of $5$-holes in every $6$-gon of a point set (see for example Section~3.2 in \cite{Matousek2002_book}) applies to pseudolinear drawings, every convex drawing with more than $ 5^{\num{8100} \cdot \log_2 (6)}$ vertices contains a $5$-hole.

For $4$-holes, we can combine the proof of B\'ar\'any and F\"uredi~\cite[Theorem~3.3]{BaranyFueredi1987} for the quadratic number of $4$-holes in point sets and the proof of Arroyo, McQuillan, Richter, and Salazar~\cite[Theorem~5]{ArroyoMRS2017_pseudolines} for the quadratic number of empty triangles in convex drawings to obtain:

\begin{lemma}
\label{lemma:4holes}
Every crossed edge in a convex drawing of $K_n$ is a chord of a $4$-hole, that is, it is one of the crossing edges of a $4$-hole.  
\end{lemma}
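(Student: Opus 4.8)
The plan is to argue by an extremal \enquote{shrinking} argument in the spirit of B\'ar\'any and Füredi \cite[Theorem~3.3]{BaranyFueredi1987}, but carried out entirely in terms of convex sides rather than Euclidean area. Let $\{a,b\}$ be a crossed edge of a convex drawing $\calD$ of $K_n$. Any edge $\{c,d\}$ that crosses $\{a,b\}$ yields a crossing on the four vertices $\{a,b,c,d\}$, and since the only simple drawing of $K_4$ with a crossing is $\calC_4$, these four vertices span a $4$-gon with diagonals $\{a,b\}$ and $\{c,d\}$. As $\{a,b\}$ is crossed, the family of $4$-gons having $\{a,b\}$ as a diagonal is therefore nonempty. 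Among all of them I would pick one, with vertices $a,c,b,d$ in convex-position order, whose convex side contains the fewest interior vertices (equivalently, a minimal one with respect to containment of convex sides), and claim that it is already a $4$-hole; this proves the lemma since $\{a,b\}$ is one of its two crossing edges.

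For the claim I would assume the convex side of this $4$-gon contains an interior vertex $z$ and derive a contradiction by producing a $4$-gon with the same diagonal $\{a,b\}$ but a strictly smaller convex side. The diagonal $\{a,b\}$ splits the convex side of the $4$-gon into the convex sides of the two triangles $\{a,c,b\}$ and $\{a,d,b\}$, so $z$ lies in one of them, say in the convex side of $\{a,c,b\}$. Because $z$ is an interior vertex, the edge $\{z,d\}$ stays inside the convex side of the $4$-gon by \cite[Lemma~3.5]{ArroyoMRS2022}, and as it joins the convex side of $\{a,c,b\}$ to the vertex $d$ lying on the opposite side of $\{a,b\}$, it is forced to cross $\{a,b\}$. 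Hence $a,z,b,d$ is again a $4$-gon with diagonal $\{a,b\}$. Using once more that the edges $\{a,z\}$ and $\{z,b\}$ remain in the convex side of $\{a,c,b\}$, the convex side of $\{a,z,b\}$ is contained in that of $\{a,c,b\}$, so the convex side of $a,z,b,d$ is strictly contained in that of $a,c,b,d$; in particular it has strictly fewer interior vertices, contradicting the minimal choice.

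The main obstacle is not the combinatorial skeleton of this argument, which is just the quadrilateral-shrinking step of B\'ar\'any and Füredi, but transporting each geometric fact faithfully into the convexity setting: that the diagonal genuinely partitions the convex side into the two triangle sides, that $\{z,d\}$ is forced to cross $\{a,b\}$, and that replacing $c$ by $z$ strictly shrinks the convex side without introducing new interior vertices. Each of these relies on the confinement of edges incident to interior vertices to the convex side together with the characterization of $\calC_4$ among drawings of $K_4$, which is exactly the machinery developed by Arroyo et al.\ for counting empty triangles in convex drawings \cite[Theorem~5]{ArroyoMRS2017_pseudolines}, and which I would reuse here. I expect the most delicate point to be the strict containment of convex sides when passing from $c$ to $z$, since this is where one must rule out configurations that are automatic for straight segments but require the convexity of $\calD$ in the drawing setting.
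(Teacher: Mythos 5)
Your proposal is correct and follows essentially the same route as the paper's own proof: both take the $4$-gon spanned by the crossed edge and a crossing partner, pass to a minimal $4$-gon having that edge as a diagonal, and show minimality forces emptiness by a shrinking step in which an interior vertex $z$ replaces the endpoint of the crossing edge on its side of the diagonal, using the confinement of edges at interior vertices to the convex side (\cite[Lemma~3.5]{ArroyoMRS2022}) to force the crossing with the diagonal and the containment of convex sides. The only cosmetic difference is that the paper locates the interior vertex in the convex sides of exactly two triangles of the $4$-gon, while you locate it in one of the two halves cut off by the diagonal; the replacement and the extremal argument are identical.
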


\begin{proof}
Let $\calD$ be a convex drawing of~$K_n$.
Let $e$ be an edge that is crossed by another edge~$f$.
The subdrawing induced by the four end-vertices of $e$ and $f$ is a $4$-gon, and we denote it by~$\calC_4$.
We assume that the vertices are labeled with $v_1, v_2, v_3 , v_4$ such that $e = \{v_1,v_3\}$ and $f = \{v_2,v_4\}$.
If $\calC_4$ is minimal, it is a 4-hole by \Cref{lem:vertexinboundarytriangle}.

Hence, we assume that there is an interior vertex $x$ of~$\calC_4$ as illustrated in \Cref{fig:4holes}.
By the properties of a $4$-gon, $x$ lies in the convex side of exactly two of its triangles.
Without loss of generality, we assume that $x$ is in the convex side of the two triangles $\{v_1,v_2,v_3\}$ and $\{v_2,v_3,v_4\}$.
By \Cref{lem:arroyo_verticesingon}, the edges $\{x, v_i\}$ are fully contained in the convex side of~$\calC_4$.
Since the edge $\{x,v_4\}$ is fully contained in the convex side of $\{v_2,v_3,v_4\}$, but has to leave the triangle $\{v_1,v_2,v_3\}$ to get to~$v_4$, it crosses the edge $e = \{v_1,v_3\}$.
Hence $v_1,x,v_3,v_4$ span another $4$-gon in which $\{v_1, v_3\}$ is one of the crossing edges.
Furthermore, since the edges $\{x,v_1\}$, $\{x,v_2\}$, and $\{x,v_3\}$ are fully contained in the convex side of~$\calC_4$, the convex side of the $4$-gon $\{v_1,x,v_3,v_4\}$ is fully contained in the convex side of~$\calC_4$.

This shows that every $4$-gon that is minimal subject to the restriction that $e$ is one of its chords is actually a minimal $4$-gon without restriction.
Consequently, by \Cref{lem:vertexinboundarytriangle}, every crossed edge $e$ gives a $4$-hole whose diagonal is $e$, which completes the proof.
\end{proof}

\begin{figure}[htb]
\centering
\includegraphics{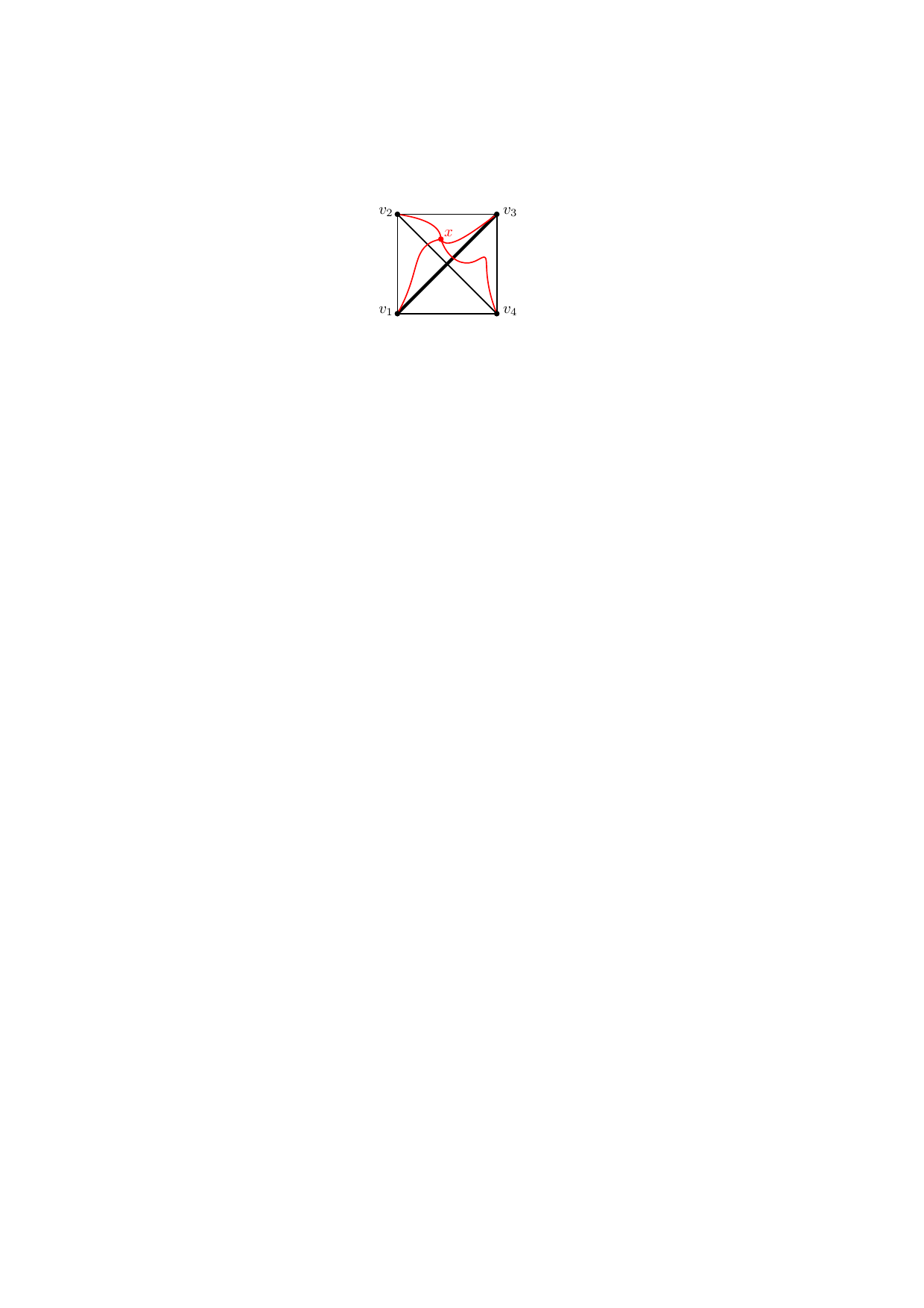}
\caption{If a $4$-gon $\{v_1,v_2,v_3,v_4\}$ with chord $e = \{v_1,v_3\}$ is not empty, then it contains a smaller $4$-gon $\{v_1,x,v_3,v_4\}$ still with chord~$e$.}
\label{fig:4holes}
\end{figure}

Note that there are $\binom{n}{2}$ edges in a drawing of the complete graph, at most $2n-2$ of which are uncrossed \cite{Ringel1964}.
Since every $4$-hole is counted at most twice, the total number of $4$-holes in a convex drawing of $K_n$ is at least $\frac{1}{2} \left( \binom{n}{2} - 2n+2 \right) = \frac{1}{4}n^2 - \frac{5}{4}n + 1$.

Since every drawing of $K_5$ contains a crossing, \Cref{lemma:4holes} also implies that every convex drawing of $K_n$ with $n \geq 5$ contains a $4$-hole.
In contrast to the convex setting, $4$-holes can be avoided in simple drawings as we show in the next section.

\section{Generalized Holes}\label{sec:generalized_holes}

Devillers, Hurtado, K\'arolyi, and Seara~\cite{DevillersHKS2003} showed that sufficiently large two-colored point sets in general position contain a monochromatic $3$-hole and constructed arbitrarily large two-colored sets without monochromatic $5$-holes.
The existence of monochromatic $4$-holes, however, remains a longstanding open problem~\cite[Problem~8.2.7]{BrassMoserPach2005}.
A weaker version was shown by Aichholzer, Hackl, Huemer, Hurtado, and Vogtenhuber~\cite{AHHHFV2010}.
They proved that every two-colored point set $P= A \ \dot\cup \ B$ contains a monochromatic generalized $4$-hole.
A~\emph{generalized \mbox{$k$-}hole} is a simple polygon (not necessarily convex) which is spanned by $k$ points of $P$ and does not contain any point of $P$ in its interior.

To define generalized $k$-holes in simple drawings we consider plane cycles.
Recall that a plane cycle divides the plane into two connected components whose closures we call sides.
An \emph{empty $k$-cycle} in a simple drawing is a plane cycle of length $k$ such that one of its sides contains no vertices in its interior.
For $k=3$ this definition coincides with empty triangles.
Since polygons in point sets can be triangulated, we say that an empty $k$-cycle is an \emph{empty $k$-triangulation} if its empty side is the disjoint union of empty triangles.

Since the proof in \cite{AHHHFV2010} only relies on triangle orientations and not on the exact geometry of the point set, their result transfers to the pseudolinear setting.
This allows us to generalize it to convex drawings in the same way as the Empty Hexagon Theorem (\Cref{thm:6holes}) using \Cref{lem:minimalkgonpseudolinear}.

\begin{corollary}
\label{cor:genmonochromatic4hole}
Every sufficiently large convex drawing on vertices $V=A \ \dot\cup \ B$ has an empty $4$-triangulation induced only by vertices from $A$ or only by vertices from~$B$.
\end{corollary}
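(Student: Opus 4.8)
The plan is to mirror exactly the argument used to prove the Empty Hexagon theorem (\Cref{thm:6holes}), transferring a known result from the pseudolinear setting to convex drawings via \Cref{lem:minimalkgonpseudolinear}. First I would invoke the result of Aichholzer et al.\ \cite{AHHHFV2010}, together with the remark in the text that their proof transfers to the pseudolinear setting, to obtain: every sufficiently large two-colored pseudoconfiguration of points contains a monochromatic generalized $4$-hole (equivalently, an empty $4$-triangulation induced only by points of one color class). The crucial point is that this statement concerns only the combinatorics of a pseudolinear drawing, so it applies verbatim to pseudolinear drawings of $K_n$.

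Next I would set up the reduction. Let $\calD$ be a convex drawing on $V = A \ \dot\cup \ B$ that is large enough. Since convex drawings do not contain $\calT_5$, the Suk--Zeng bound \cite{SukZeng2022} guarantees a $k$-gon $\calC_k$ for any fixed $k$; I would choose $k$ large enough that the monochromatic generalized $4$-hole result applies to any pseudolinear drawing on $k$ or more vertices, and then take a \emph{minimal} such $k$-gon $G$. By \Cref{lem:minimalkgonpseudolinear}, the subdrawing $\calD'$ induced by $G$ together with its interior vertices is f-convex, and since the existence of holes is invariant under the choice of the outer cell, I may choose the marking face $F$ to be the unbounded face, so that $\calD'$ is pseudolinear (and hence, by \cite{BalkoFulekKyncl2015}, corresponds to a pseudoconfiguration of points). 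The two-coloring of $V$ restricts to a two-coloring of the vertices of $\calD'$.

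Applying the transferred result to the colored pseudoconfiguration $\calD'$, I would obtain a monochromatic generalized $4$-hole $H$ in $\calD'$, say using only vertices from $A$. The final step is to promote $H$ from a hole in $\calD'$ to a hole in $\calD$: the interior of the generalized $4$-hole $H$ contains no vertex of $\calD'$, and every vertex of $\calD$ lying in the relevant bounded region of $G$ is by definition an interior vertex of $G$ and therefore already belongs to $\calD'$. Hence $H$ contains no vertex of $\calD$ in its interior either, so $H$ is a monochromatic empty $4$-triangulation in $\calD$, as claimed.

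The main obstacle I anticipate is the claim that the emptiness of a generalized hole transfers correctly from $\calD'$ to $\calD$. For the convex $6$-hole in \Cref{thm:6holes}, emptiness is governed by the convex side of the hole, and the containment argument for interior vertices is the clean statement cited from \cite[Lemma 3.5]{ArroyoMRS2022}. For a \emph{generalized} (non-convex) $4$-hole the relevant region is the interior of a simple polygon rather than a convex side, so I would need to verify that the region bounded by $H$ inside $\calD'$ really lies within the convex side of $G$, ensuring that any vertex of $\calD$ inside $H$ is captured as an interior vertex of $G$ and thus lives in $\calD'$. I expect this to follow from the fact that $H$ is spanned by vertices in the convex side of $G$ and that, in convex drawings, edges between such vertices stay in the convex side, but making the polygonal-interior bookkeeping precise is the step requiring the most care.
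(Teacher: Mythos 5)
Your proposal is correct and takes essentially the same route as the paper, whose entire proof of this corollary is the preceding sentence: transfer the result of Aichholzer et al.~\cite{AHHHFV2010} to the pseudolinear setting and then repeat the argument of \Cref{thm:6holes} (minimal $k$-gon via Suk--Zeng, \Cref{lem:minimalkgonpseudolinear}, pseudoconfiguration of points, lift emptiness back to $\calD$). The obstacle you flag at the end is resolved exactly as you sketch---the boundary of the generalized $4$-hole lies in the convex side of $G$ by \cite[Lemma~3.5]{ArroyoMRS2022}, and since the marking face lies in the non-convex side of $G$, the bounded interior of the polygon is contained in the convex side of $G$---a bookkeeping detail the paper leaves implicit.
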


As the following construction (\Cref{fig:new modifiedtwisted}) shows, there exist simple drawings of~$K_n$ without any empty $4$-triangulation.
For the construction, we start with the twisted drawing~$\calT_n$ and reroute some edges such that the drawing is still crossing maximal, that is, every 4-tuple contains a crossing.
The resulting drawing $\calT'_n$ then does not contain any empty $4$-triangulation and thus no $4$-hole.

\begin{proposition}
\label{prop:nogen4hole}
For $n \geq 6$ the simple drawing $\calT_n'$ contains no empty $4$-tri\-an\-gu\-la\-tion.
\end{proposition}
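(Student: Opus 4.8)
The plan is to reduce the absence of empty $4$-triangulations to the absence of ordinary $4$-holes, and then to establish the latter from the explicit crossing pattern of $\calT_n'$. The reduction rests on the fact that, by construction, $\calT_n'$ is crossing maximal, so every four vertices span a $K_4$ with exactly one crossing; equivalently, exactly one of the three perfect matchings of the four vertices consists of crossing edges. A $4$-cycle on these vertices is plane if and only if its two diagonals form this crossing matching, since otherwise two opposite boundary edges would cross. Hence each four vertices admit a \emph{unique} plane $4$-cycle, and it is a convex $\calC_4$. Both diagonals of a $\calC_4$ lie on its convex side, so that is the only side split by a drawn chord into two triangles; consequently the empty side of an empty $4$-triangulation must be the convex side, and each diagonal then exhibits it as the union of two empty triangles. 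In other words, an empty $4$-triangulation of $\calT_n'$ is exactly a $4$-hole, and it suffices to prove that $\calT_n'$ contains no $4$-hole.

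To do this, I would first read off the $4$-holes of the unmodified twisted drawing $\calT_n$ from its nesting crossing rule. For vertices $a<b<c<d$ the unique crossing matching is $\{v_av_d,\, v_bv_c\}$, so the convex $\calC_4$ is $v_av_bv_dv_c$, and using the order-reversing symmetry $i \mapsto n+1-i$ of $\calT_n$ together with the explicit drawing one pins down precisely which of these $\calC_4$'s have an empty convex side (the $4$-hole on $v_1,v_2,v_{n-1},v_n$ being one of them). With the precise description of the rerouting in hand, I would then verify that each rerouted edge changes the crossing pattern on exactly those four-subsets that form $4$-holes in $\calT_n$: for each such subset the former convex $\calC_4$ either ceases to be plane (its diagonals are no longer the unique crossing matching) or remains a $\calC_4$ whose convex side, through a change in the surrounding five-vertex subdrawing, now contains a vertex. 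Either way no $4$-hole survives.

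The main obstacle is the simultaneous bookkeeping. A single rerouted edge lies in many four-subsets, so I must control all of the following at once: that $\calT_n'$ remains a \emph{simple} drawing (each rerouted edge still meets every other edge at most once) and stays crossing maximal; that every $4$-hole of $\calT_n$ is destroyed; and, crucially, that the rerouting creates no \emph{new} $4$-hole on a four-subset that was previously not one. This triple constraint is exactly where the explicit combinatorial description of $\calT_n'$ (deferred to the appendix) is indispensable, and where the hypothesis $n \ge 6$ enters, since for smaller $n$ there are too few vertices to occupy all the relevant convex sides. Once the affected four-subsets are enumerated and each is shown to lose planarity or acquire an interior vertex, the reduction from the first paragraph yields that $\calT_n'$ has no empty $4$-triangulation.
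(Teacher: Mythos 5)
Your first paragraph is correct and matches the paper's first step: crossing maximality forces every four vertices to induce exactly one crossing, so the only plane $4$-cycle on any four vertices is a convex $\calC_4$, its only side that can be split by drawn chords is the convex one, and hence an empty $4$-triangulation is precisely a $4$-hole. From there on, however, you only describe a plan, and you concede yourself that you cannot carry it out without the explicit combinatorial description of $\calT_n'$: which four-subsets are $4$-holes, how the rerouting destroys them, and why no new ones appear are all left unverified. That is the entire substance of the proposition, so this is a genuine gap rather than a stylistic difference. Worse, the plan as stated would fail: it is not true that the rerouted edges change the crossing pattern on \enquote{exactly those four-subsets that form $4$-holes in $\calT_n$} --- the twisted drawing $\calT_n$ has a single $4$-hole, on $\{1,2,n-1,n\}$, while rerouting the edges at vertex $2$ alters the crossings of many four-subsets containing $2$. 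And restricting attention to four-subsets whose own crossing pattern changed is insufficient in any case, because being a $4$-hole also depends on where all the \emph{other} vertices and edges lie; a new $4$-hole could in principle arise on a four-subset whose internal crossings are untouched.

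The idea you are missing, which is what lets the paper avoid the global bookkeeping you identify as the obstacle, is that in a $4$-hole every triple of its vertices spans an empty triangle. The subdrawing of $\calT_n'$ induced by $[n]\setminus\{2\}$ is a twisted drawing $\calT_{n-1}$, whose empty triangles are classified: they are exactly $\{1,3,i\}$ and $\{i,n-1,n\}$, and its only $4$-hole is $\{1,3,n-1,n\}$. Consequently, a $4$-hole of $\calT_n'$ avoiding vertex $2$ could only be $\{1,3,n-1,n\}$, which is blocked because vertex $2$ lies in the triangle $\{3,n-1,n\}$; and a $4$-hole containing vertex $2$ would force its three other vertices to span an empty triangle of $\calT_{n-1}$, hence to contain $\{1,3\}$ or $\{n-1,n\}$, which in turn would force the triangle $\{1,2,3\}$ or $\{2,n-1,n\}$ to be empty in $\calT_n'$ --- and neither is, by construction. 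This reduces everything to three local checks, plus the crossing-maximality of $\calT_n'$, which the paper establishes by an explicit enumeration and count of all $\binom{n}{4}$ crossings and which you assumed rather than proved.
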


\begin{proof}
We start by giving the exact crossing edge pairs in $\calT_n'$ and thus describing the drawing up to isomorphism.
The vertices $1, 3, 4, \ldots ,n$ form a twisted drawing $\calT_{n-1}$ and hence every $4$-tuple from $[n] \backslash \{2\}$ contains a crossing, giving $\binom{n-1}{4}$ crossings.
More specifically, the edges $\{i,\ell\}$ and $\{j,k\}$ cross for $i,j,k,\ell \in [n] \setminus \{2\}$ with $i<j<k<\ell$.

It remains to describe the crossings in $4$-tuples which do contain vertex~$2$.
The edge $\{2,1\}$ crosses the edges $\{3,n\}$, $\{4,i\}$ for $i = 5, \ldots n$, and $\{3,4\}$, which are $n-2$ crossings.
The edge $\{2,3\}$ has no crossings and the edge $\{2,4\}$ crosses only the edge $\{3,n\}$.
For $j = 5, \ldots, n-1$ the edge $\{2,j\}$ crosses the two edges $\{3,n\}$ and $\{1,3\}$, the $n-j$ edges $\{1,j+1\},\ldots,\{1,n\}$, and the edges $\{i,k\}$ for $2<i<k<j$, of which there are $\binom{j-3}{2}$.
Finally, the edge $\{2,n\}$ crosses the $\binom{n-4}{2}$ edges $\{i,j\}$ for $3 < i < j < n$.

\begin{figure}[htb]
\centering
\includegraphics{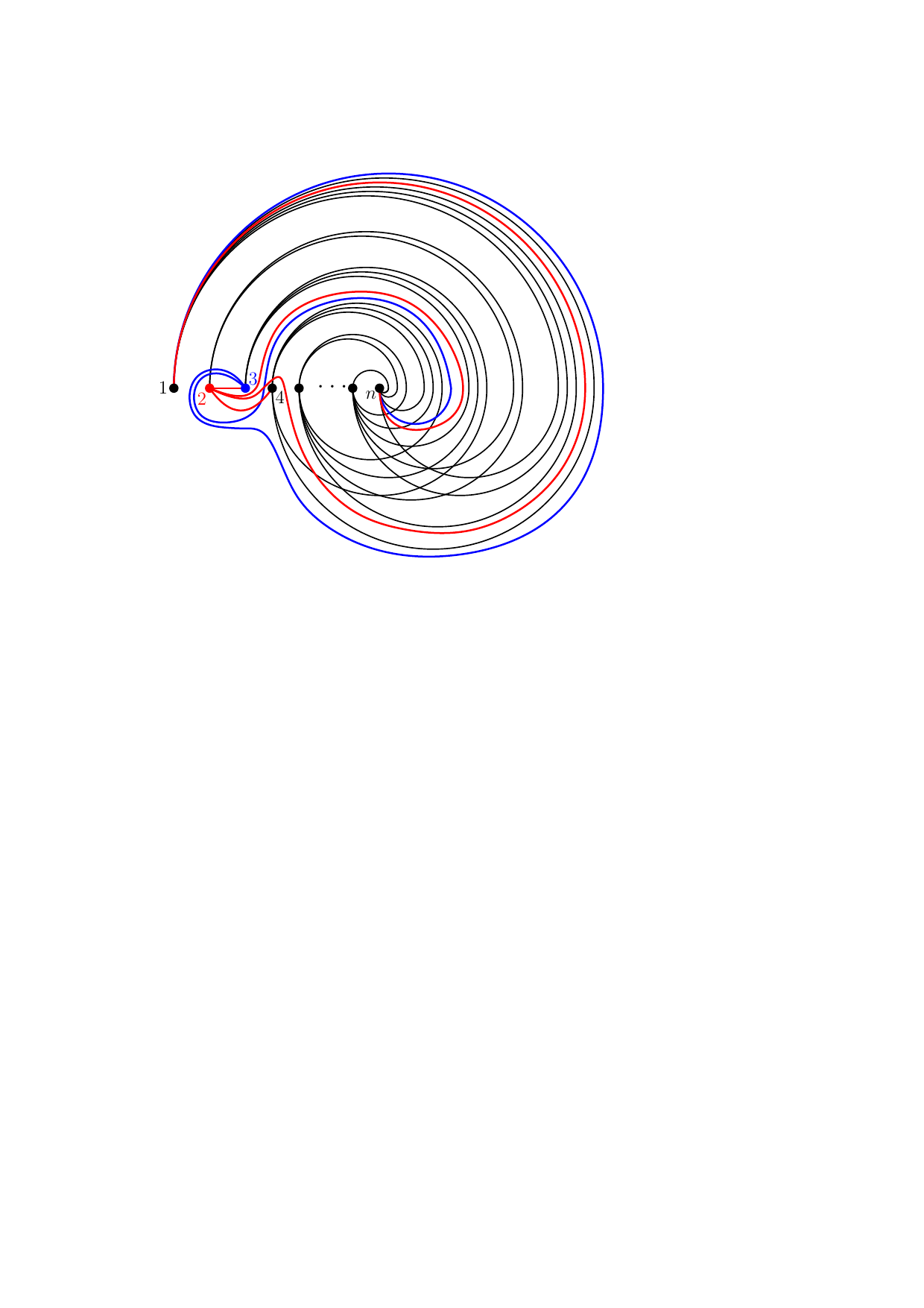}
\caption{The drawing $\calT_n'$ without empty $4$-triangulations for $n \geq 6$.}
\label{fig:new modifiedtwisted}
\end{figure}

In total there are 
\begin{align*}
&\binom{n-1}{4} + (n-2) + 1 + \sum_{j=5}^{n-1} \left( 2+(n-j)+\binom{j-3}{2} \right) + \binom{n-4}{2} \\
= \ &\binom{n-1}{4} + 3n-11 + \binom{n-4}{2} + \sum_{j=2}^{n-4} \binom{j}{2} + \binom{n-4}{2} \\
= \ &\binom{n-1}{4} + 2n-7 + \binom{n-3}{2} + \binom{n-3}{3} + \binom{n-4}{2} \\
= \ &\binom{n-1}{4} + 1+(n-4) + \binom{n-3}{2} + \binom{n-2}{3 } \\
= \ &\binom{n-1}{4} + \binom{n-1}{3} = \binom{n}{4}
\end{align*}
crossings because of the well-known identities $\sum_{j = r}^{n} \binom{j}{r} = \binom{n+1}{r+1}$ and $\sum_{k = 0}^{m} \binom{n+k}{k} = \binom{n+m+1}{m}$.
Hence $\calT_n'$ is crossing maximal.

Because of this crossing maximality every empty $4$-triangulation is a $4$-hole since the drawing of the four induced vertices has a crossing.
In the twisted subdrawing $\calT_{n-1}$ induced by $1,3,\ldots,n$ the empty triangles are $\{1,3,i\}$ for $i = 4, \ldots, n$ and $\{i, n-1,n\}$ for $i = 1, 3 \ldots n-2$ and the only 4-hole is $\{1,3,n-1,n\}$, which is not a 4-hole in $\calT'_n$ because we placed the vertex $2$ into the triangle $\{3,n-1,n\}$.
Hence if there is a $4$-hole, it consists of the vertex $2$ and the three vertices of an empty triangle of the induced subdrawing $\calT_{n-1}$.
However, since all empty triangles from the induced subdrawing $\calT_{n-1}$ have either both $1$ and $3$ or both $n-1$ and $n$ as vertices, at least one of the two triangles $\{1,2,3\}$ or $\{2,n-1,n\}$ must be empty.
This is not the case in the constructed drawing; the triangle $\{1,2,3\}$ has $4$ on one side and all other vertices on the other side and the triangle $\{2,n-1,n\}$ has $3$ on one side and all other vertices on the other side.
This completes the proof.
\end{proof}

In contrast to this construction, if instead of empty $4$-triangulations we only ask for empty $4$-cycles, then we can actually guarantee their existence in all simple drawings of $K_n$.
This resolves one case of a recent conjecture by Bergold, Felsner, M.\ Reddy, Orthaber, and Scheucher~\cite{bfmos-2025-phccd}.
They showed that every convex drawing contains an empty $k$-cycle for all $3 \leq k \leq n$ and conjectured that this also holds for simple drawings.

\begin{conjecture}[\cite{bfmos-2025-phccd}]
\label{conj:emptykcycles}
Every simple drawing of $K_n$ contains an empty $k$-cycle for each $3 \leq k \leq n$.
\end{conjecture}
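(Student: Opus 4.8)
The natural plan is to prove \Cref{conj:emptykcycles} by an induction that changes the length of an empty cycle by one at a time, anchored at the values we can already control. At the bottom, empty $3$-cycles are exactly empty triangles and exist in every simple drawing of $K_n$ \cite{Harborth98_emptytriangles}, while \Cref{thm:empty4cycle} supplies an empty $4$-cycle. At the top, observe that an empty $n$-cycle is nothing but a \emph{plane Hamiltonian cycle}: once all $n$ vertices lie on a plane cycle, both of its sides are automatically free of vertices. Thus the extreme lengths $k=3$ and $k=n$ are under control only insofar as plane Hamiltonicity of simple drawings is, and the real task is to interpolate through the intermediate lengths.

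The engine I would use for this interpolation is an \emph{ear-shortening} operation. Suppose $C$ is an empty $k$-cycle with empty (closed) side $S$, and let $a,b,c$ be three consecutive vertices of $C$. In a simple drawing the chord $\{a,c\}$ is already drawn and, being incident to them, does not cross the edges $\{a,b\}$ and $\{b,c\}$. If this chord lies entirely inside $\bar S$, then $\{a,b\}\cup\{b,c\}\cup\{a,c\}$ bounds a disk $\Delta\subseteq\bar S$, and replacing the path $a\,b\,c$ by the single edge $\{a,c\}$ produces a plane $(k-1)$-cycle $C'$ on the vertex set $V(C)\setminus\{b\}$. Its empty side is $S\setminus\Delta$, which contains no vertex because it sits inside the already empty region $S$ (the deleted vertex $b$ now lies strictly on the other side of $C'$). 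Hence $C'$ is an empty $(k-1)$-cycle, and starting from a plane Hamiltonian cycle and iterating this operation would yield empty $k$-cycles for all $k$ from $n$ down to $3$ at once.

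The main obstacle is precisely the existence of a \emph{good ear}: a consecutive triple $a,b,c$ whose predetermined chord $\{a,c\}$ stays on the empty side rather than crossing $C$ or escaping to the non-empty side. For a polygon drawn with straight or pseudolinear edges this is guaranteed by the two-ears theorem, which is essentially why \Cref{conj:emptykcycles} is already known for convex drawings \cite{BFRS2024} and why \Cref{lem:minimalkgonpseudolinear} lets one transfer such pseudolinear arguments; but in a general simple drawing the chord is dictated by the drawing and may well cross the cycle, so the shortcut can fail to be plane at every ear simultaneously. The crux of a proof would be to show, from the rotation system of the drawing and the local behavior of the three candidate edges, that a good ear---or, more flexibly, any non-crossing empty-side chord separating off a single vertex---must exist; adaptively re-choosing the cycle $C$ at each step, rather than peeling a fixed one, is likely necessary.

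Two caveats frame the difficulty. First, even the top anchor $k=n$ is the well-known conjecture of Rafla that every simple drawing of $K_n$ admits a plane Hamiltonian cycle, which remains open, so a complete resolution of \Cref{conj:emptykcycles} would in particular settle it and one should not expect that anchor for free. Second, a symmetric \emph{bottom-up} strategy---growing an empty $k$-cycle into an empty $(k+1)$-cycle by rerouting one of its edges through a visible vertex on the non-empty side---runs into the dual obstacle of controlling crossings of the newly used edges. I would therefore pursue the top-down ear-shortening route, isolating the good-ear existence statement as the single combinatorial lemma on which the whole induction rests.
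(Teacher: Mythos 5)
You should first note a category error: \Cref{conj:emptykcycles} is stated in the paper as an open conjecture from \cite{BFRS2024}, not as a result the paper proves. The paper resolves only the case $k=4$ (\Cref{thm:empty4cycle}), observes that $k=3$ is Harborth's theorem \cite{Harborth98_emptytriangles}, and that $k=n$ coincides with Rafla's conjecture \cite{Rafla1988}, which remains open. So there is no proof in the paper to compare against, and your text is, by your own framing, a research program rather than a proof. As such it has two genuine, unfixed gaps, both of which you correctly identify but neither of which you close. First, your induction is anchored at $k=n$, i.e., at a plane Hamiltonian cycle, which is exactly Rafla's open conjecture; even if everything else worked, you would only have shown \enquote{Rafla implies \Cref{conj:emptykcycles}}, a conditional statement. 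Second, the good-ear lemma is not a lemma but the entire difficulty: in a simple drawing the chord $\{a,c\}$ of a consecutive triple on an empty cycle $C$ is dictated by the drawing, and nothing prevents it from crossing $C$ or escaping to the non-empty side at \emph{every} ear simultaneously; the suggestion to \enquote{adaptively re-choose the cycle} is not an argument, and no mechanism (rotation-system or otherwise) is offered that would make it one. To your credit, the local step is sound where its hypothesis holds: if $\{a,c\}$ lies in the closed empty side $\bar S$ then, since crossings are proper, it cannot meet a non-adjacent edge of $C$ without leaving $\bar S$, it is uncrossed with $\{a,b\}$ and $\{b,c\}$ by simplicity, and the shortened cycle $C'$ has empty side $S\setminus\Delta$ with $b$ on its other side. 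But a correct one-step reduction with an unproven existence hypothesis, anchored at an open conjecture, is not a proof of the statement.

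It is also worth contrasting your top-down scheme with how the paper handles the one case it does settle. For $k=4$ the paper avoids induction and Hamiltonicity entirely: by \cite[Corollary~3.4]{gpt-2021-psctd} every simple drawing of $K_n$ contains a plane subdrawing consisting of a spanning star $S_v$ plus a spanning tree on the remaining vertices; this subdrawing has $2n-3$ edges and $n-1$ faces, each face is bounded by exactly two star edges, and if no face were a quadrilateral and no two triangular faces were adjacent, a face--edge count would force at least $2n-2$ edges, a contradiction. This yields an empty $4$-cycle through any prescribed vertex, unconditionally. If you want to make progress on intermediate $k$, a more promising route than ear-shortening a hypothetical Hamiltonian cycle would be to extend this kind of global counting over plane spanning subdrawings, or to grow the paper's empty $4$-cycle upward---though, as you note, the bottom-up direction faces its own crossing-control obstacle, and the paper offers no tool for it either.
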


While the case $k=3$ follows by Harborth's result~\cite{Harborth98_emptytriangles}, the case $k=n$ coincides with Rafla's conjecture concerning the existence of plane Hamiltonian cycles in all simple drawings of $K_n$~\cite{Rafla1988}.
For the proof of the case $k=4$ we use results on plane subdrawings by Garc\'ia, Pilz, and Tejel~\cite{gpt-2021-psctd}. 

\begin{theorem}
\label{thm:empty4cycle}
Let $\calD$ be a simple drawing of $K_n$ with $n \geq 4$ and let $v$ be a vertex of~$\calD$.
Then $\calD$ contains an empty $4$-cycle passing through~$v$.
\end{theorem}

\begin{proof}
For a fixed vertex~$v$, we consider the spanning star $S_v$ centered at~$v$.
By \cite[Corollary~3.4]{gpt-2021-psctd}, there is a plane subdrawing $\calD'$ of $\calD$ that consists of the star~$S_v$ and some spanning tree $T$ on the other $n-1$ vertices.
Note that $\calD'$ has exactly $2n-3$ edges and $n-1$ faces.
Every face $F$ of~$\calD'$ contains $v$ on its boundary because the tree $T$ is cycle-free and since $\calD'$ is 2-connected \cite[Theorem~3.1]{gpt-2021-psctd}, 
$F$ is bounded by exactly two edges of~$S_v$.

If there is a face of $\calD'$ with exactly $4$ boundary edges or if there are two adjacent triangular faces, we obtain an empty $4$-cycle passing through~$v$ and the statement follows.
Otherwise we count the number of edges $|E|$ in~$\calD'$:
At most half of the $n-1$ faces are triangles so that none of them are adjacent.
All other faces have at least $5$ boundary edges.
Since every edge is incident to exactly two faces, we have $|E| \geq \frac{1}{2} \left( 5(n-1) - 2\left\lfloor\frac{n-1}{2}\right\rfloor \right) \geq 2n-2$. 
This is a contradiction to the fact that $\calD'$ contains exactly $2n-3$ edges.
\end{proof}

The above theorem implies a linear lower bound on the number of empty $4$-cycles.
This is similar to the minimum number of empty triangles which is asymptotically linear as well~\cite{AichholzerHPRSV2015}.

\begin{corollary}
Every simple drawing of $K_n$ with $n \geq 4$ contains at least $\lceil \frac{n}{4} \rceil$ empty $4$-cycles.
\end{corollary}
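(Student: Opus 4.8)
The final statement is the corollary asserting that every simple drawing of $K_n$ with $n \geq 4$ contains at least $n/4$ empty $4$-cycles. The natural plan is to leverage \cref{thm:empty4cycle}, which already guarantees an empty $4$-cycle through \emph{any} prescribed vertex $v$. Since each such $4$-cycle passes through exactly four vertices, a single empty $4$-cycle can be ``charged'' to at most four of the $n$ vertices. The plan is therefore a straightforward double-counting / covering argument: apply \cref{thm:empty4cycle} once for each of the $n$ vertices to obtain a collection of empty $4$-cycles whose union of vertex sets is all of $V$, then bound from below the number of \emph{distinct} cycles needed to cover all $n$ vertices.

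Concretely, first I would invoke \cref{thm:empty4cycle} for every vertex $v \in V$, producing for each $v$ an empty $4$-cycle $C_v$ passing through $v$. Let $\mathcal{E}$ denote the set of distinct empty $4$-cycles arising this way. Every vertex $v$ lies on at least one cycle in $\mathcal{E}$ (namely $C_v$), so the cycles in $\mathcal{E}$ collectively cover all $n$ vertices. Since each $4$-cycle contains exactly four vertices, a set of $m$ empty $4$-cycles can cover at most $4m$ vertices. Hence $4|\mathcal{E}| \geq n$, which gives $|\mathcal{E}| \geq n/4$, as claimed. This is essentially a one-line counting step once \cref{thm:empty4cycle} is in hand.

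I do not anticipate a genuine obstacle here: the only thing to be slightly careful about is that the cycles $C_v$ produced for different vertices might coincide, which is precisely why one passes to the set $\mathcal{E}$ of \emph{distinct} cycles and counts covered vertices rather than naively claiming $n$ cycles. The bound $n/4$ is exactly the covering number lower bound $n / 4$ coming from the fact that each cycle covers four vertices. One could phrase it as: if there were fewer than $n/4$ distinct empty $4$-cycles in total, they would cover fewer than $4 \cdot (n/4) = n$ vertices, contradicting that every vertex lies on some empty $4$-cycle by \cref{thm:empty4cycle}. No floor/ceiling subtleties are needed for the stated form of the bound, since $n/4$ is written without a floor.
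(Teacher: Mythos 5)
Your proof is correct and matches the paper's intended argument: the paper states the corollary as an immediate consequence of \cref{thm:empty4cycle}, and the charging argument you give (each empty $4$-cycle covers exactly four vertices, so covering all $n$ vertices requires at least $n/4$ distinct cycles) is precisely that implication. Your attention to de-duplicating the cycles $C_v$ is the right care to take, and nothing further is needed.
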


While the twisted drawing $\calT_n$ is conjectured to minimize the number of empty triangles, it contains $\Theta(n^3)$ empty $4$-cycles, since the cycles $(i,j,l,k)$ for $i<j<k<l$ separate the elements between $j$ and $k$ from the rest, whereas all other $4$-cycles are crossing themselves.
This is certainly not minimal as, in the following, we construct drawings with $\Theta(n^2)$ empty $4$-cycles; see \Cref{fig:few4cycles}.

\begin{proposition}
\label{prop:empty4-cycle}
There is a simple drawing of $K_n$ that admits $\frac{1}{8}n^2+O(n)$ empty $4$-cycles.
\end{proposition}

Note that this is strictly less than the lower bound of $\frac{5}{2}n^2 - \Theta(n)$ for the number of empty $4$-cycles in geometric drawings shown in~\cite{afghhhuv-2014-4hps}.
Moreover, in the geometric setting, the number of empty $k$-cycles with $k \geq 4$ is actually conjectured to be super-quadratic~\cite{AICHHOLZER2015528}.

\begin{proof}
We start with the drawing $\calD_5 = \calC_5$ with vertices $1, \ldots, 5$ labeled counter-clockwise.
We then recursively construct the drawing $\calD_{n+1}$ from $\calD_n$ as follows:
We add a new vertex $n+1 \geq 6$ close to the vertex $n$ in a chosen cell $c_{n}$ next to the edge $e_{n} := \{ n, i_{n} \}$ for some choice of~$i_{n}$.
We connect $n+1$ and $n$ with an uncrossed edge.
Then we add the edges $\{ n+1, j \}$ for $j < n$ by making them cross all edges between $e_n$ and $\{ n, j \}$ incident to $n$ close to $n$ and then follow the edge $\{ n, j \}$ from $n$ to~$j$.
In particular, the edge $\{ n+1, i_n \}$ crosses all edges incident to $n$ except $e_n$ and $\{n+1,n\}$ before following $e_n$ to~$i_n$.
As shown by Harborth and Mengersen~\cite{HarborthMengersen1992} the resulting drawing $\calD_{n+1}$ is crossing maximal for all choices of $i_n$ and cells $c_n$ next to it.
Also note that by construction $n+1$ and $n$ have the same rotation (ignoring the edge between them).

For our construction, we assume that $n$ is odd and we perform two steps at once, hence producing drawings $\calD_{n+1}$ and $\calD_{n+2}$.
In the first step from $n$ to $n+1$, we choose $i_n=n-2$ and the cell $c_n$ not to be incident to $\{n,n-1\}$.
This is well-defined for the base case $n=5$ and also for all larger odd~$n$, as we make sure in the following that $n$ and $n+1$ are consecutive in the rotation of~$n+2$.
In the second step from $n+1$ to $n+2$, we choose $i_{n+1}=n$ and $c_{n+1}$ to be the cell not incident to $\{n+1,n-2\}$.
This cell is well-defined as we added the previous vertex $n+1$ in the cell incident to $e_{n}=\{n,n-2\}$.
We start with some general observations:
\begin{itemize}
\item Since the drawings are crossing maximal, every $4$-tuple of vertices can produce at most one empty $4$-cycle.
\item The vertices $n$, $n+1$, and $n+2$ have the exact same rotation (ignoring the edges between them), that is, removing a non-trivial subset of them from $\calD_{n+2}$ results in a drawing isomorphic to $\calD_{n}$ or $\calD_{n+1}$.
\item Every empty $4$-cycle in $\calD_{n}$ involving vertex $n$ that is still empty in $\calD_{n+2}$ produces two other empty $4$-cycles in $\calD_{n+2}$ involving vertex $n+1$ and $n+2$ respectively. These are the only $4$-cycles involving exactly one of $n$, $n+1$, and $n+2$.
\item An empty 4-cycle in $\calD_n$ still exists in $\calD_{n+2}$ if and only if it does not contain the cell $c_{n+1}$. In particular, only empty 4-cycles of $\calD_n$ incident to $n$ are destroyed by $n+1$ and $n+2$ as $c_{n+1}$ is incident to $n$ and any other 4-cycle containing them would contain $n$ as well.
\end{itemize}

\begin{figure}[htb]
\centering
\includegraphics[page=1]{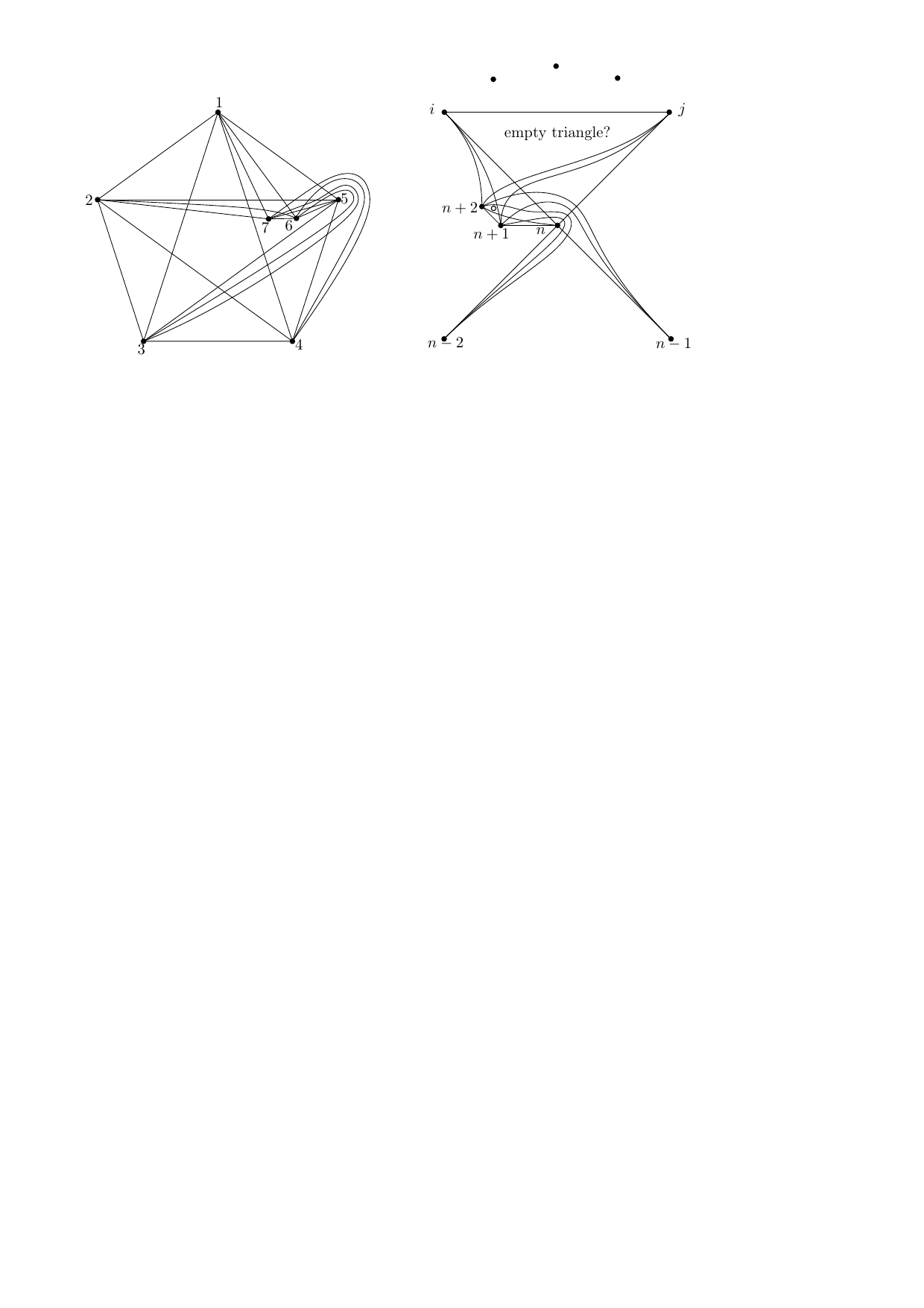}
\caption{Constructing the drawing $\calD_{n+2}$ of $K_{n+2}$ with few empty $4$-cycles for $n$ odd. The small circle indicates the cell where the additional vertices $n+3$ and $n+4$ will be put in the next step.}
\label{fig:few4cycles}
\end{figure}

We are left to characterize the empty $4$-cycles involving at least $2$ of $n$, $n+1$, and $n+2$:
The cycle $(i,n,n+1,n+2)$ is empty for all $i \leq n-1$; it is actually a $4$-hole.
In particular, the empty side contains $\{n,n+2\}$ completely, so this empty $4$-cycle will be destroyed in the next step when we introduce vertices just next to that edge.
See~\Cref{fig:destroyed4cycles} for an illustration.

Finally the $4$-cycles $(i,j,x,y), i<j<n\leq x<y$ are empty if and only if $\{i,j,n\}$ is an empty triangle in $\calD_{n+2}$.
See \Cref{fig:few4cycles} for an illustration.
Notably, empty triangles $\{i,j,n\}$ from $\calD_n$ that contain $n+1$ or $n+2$ also do not leave empty triangles nor empty 4-cycles with $n+1$ and/or $n+2$ since those contain $n$.

\begin{figure}[htb]
\centering
\includegraphics[page=2]{figs/few4cycles.pdf}
\caption{The empty triangles and 4-cycles incident to $n+2$ that will not be empty anymore, once $n+3$ is added to the drawing. The empty side of these cycles is orange, while $n+3$ will be put at the location of the small circle.}
\label{fig:destroyed4cycles}
\end{figure}

It is therefore important to also consider which empty triangles are incident to $n+2$ after one step.
The ones of the form $\{i,j,n+2\}$, $i<j<n$ exist if and only if triangle $\{i,j,n\}$ is still empty in $\calD_{n+2}$ because of the second observation and the argumentation in the last paragraph.
Since $\{i,n,n+1,n+2\}$ is a $4$-hole for all $ i<n$, all other triangles are empty as well.

However, the empty $4$-cycles of the form $(i,j,x,n+2)$ as well as triangles of the form $\{i,x,n+2\}$ are again destroyed in the next step.
See~\Cref{fig:destroyed4cycles} for an illustration. 
If $x=n$, this is true because for $n+1$ not to be in the empty side, since $\{n+1,n+2\}$ is uncrossed, the empty side has to be on the other side of $\{n,n+2\}=e_{n+2}$. 
If $x=n+1$, we know $\{n,n+2\}=e_{n+2}$ is crossed only by edges incident to $n+1$ such as $\{i,n+1\}$ and $\{j,n+1\}$ so the empty side has to contain the first segment of it completely.

Thus the only empty 4-cycles introduced in this step, which will stay empty through the next step are of the form $(i,j,n,n+1)$ for each empty triangle $\{i,j,n\}$, whereas the empty triangles $\{i,j,n\}$ that are still empty after that step give rise to empty triangles $\{i,j,n+2\}$ and there is a single additional empty triangle $\{n,n+1,n+2\}$.
From \Cref{fig:few4cycles} it is easy to convince yourself, that after the first step, the only empty triangles that vertex $5$ is still incident to are $\{1,2,5\}$ and $\{3,4,5\}$.
Thus the empty triangles incident to the last vertex $n$ are going to be all triangles of the kind $\{2k-1,2k,n\}$ for some $k<\frac{n}{2}$ and the empty $4$-cycles that will stay are of the form $(2i-1,2i,2j-1,2j)$ for $i<j\leq\frac{n}{2}$.
These are only $\binom{\lfloor\frac{n}{2}\rfloor}{2} = \frac{1}{8}n^2 + O(n)$ empty $4$-cycles and the linear number of additional empty $4$-cycles incident to $n$ of the forms $(i,n-2,n-1,n), i<n-2$ and $(2k-1,2k,x,n), k<\frac{n-2}{2}, x\in\{n-2,n-1\}$ do not change these asymptotics.
\end{proof}

\section{Conclusion}\label{sec:conclusion}

We have shown that every convex drawing of $K_n$ with $n \geq 5$ contains a quadratic number of $4$-holes and that sufficiently large convex drawings contain $5$- and $6$-holes, while $7$-holes do not exist in general.
For $k \in \{ 5, 6 \}$ given, it remains an interesting open question to determine the smallest integer $n(k)$ such that every convex drawing of $K_n$ with $n \geq n(k)$ contains a $k$-hole.

In the geometric setting, Harborth~\cite{Harborth1978_5holes} showed that $10$ points are sufficient to contain a $5$-hole and since recently it is known that $30$ points in general position always contain a $6$-hole~\cite{HeuleScheucher2024,Overmars03}.
Note that for larger point sets we can find a $6$-hole in the $30$ leftmost points, which is still a $6$-hole in the whole point set.
By this argument, containing a $k$-hole is a monotone property for point sets.
In contrast to that, we used the SAT framework from \cite{bs-2025-isdknsat-arxiv} to find convex drawings for $n \leq 10$ and $n=12$ without $5$-holes, while proving that every convex drawing for $n =11$ and $13 \leq n \leq 16$ contains a $5$-hole.
This shows that containing a $k$-hole is in general not a monotone property for convex drawings.
Based on the computational data, however, we conjecture that every convex drawing on at least $13$ vertices contains a $5$-hole.

It would further be interesting to obtain better bounds on the size of a largest $k$-gon and on the size of a largest $f$-convex subdrawing in a convex drawing of~$K_n$.
The currently best estimate for a $k$-gon is by Suk and Zeng~\cite{sz-2024-upcstg}, which yields $\Omega( (\log n)^{1/2-o(1)} )$, and combining this with \Cref{lem:minimalkgonpseudolinear} yields an $f$-convex drawing of the same size.

Furthermore, it would be intriguing to define some kind of twisted hole of size $k$ based on the twisted drawing~$\calT_k$.
Then, in the flavor of the result that every simple drawing of $K_n$ contains either a $k$-gon or twisted subdrawing of a certain size~\cite{PachSolymosiToth2003,sz-2024-upcstg}, one could try to show the existence of a $6$-hole or twisted hole of size $6$ in every simple drawing of $K_n$.
Clearly a twisted hole of size $k$ would have to be defined via a plane $k$-cycle.
However, while in a $k$-gon the plane $k$-cycle is unique, the twisted drawing $\calT_k$ contains exponentially many plane $k$-cycles~\cite{o-2022-cfhcsdcg} and it is unclear which of them would be a somehow canonical choice.

Finally, while our construction for few empty $4$-cycles shows a clear difference to the geometric setting, it remains an open question whether a sub-quadratic or even just a linear number of empty $4$-cycles is possible in a simple drawing of $K_n$.

\bibliography{references.bib}

\end{document}